\newtheorem{theorem}{Theorem}[section]
\newtheorem{example}[theorem]{Example}
\newtheorem{remark}[theorem]{Remark}
\newtheorem{lemma}[theorem]{Lemma}
\newenvironment{proof}[1][Proof]{\textbf{#1.} }{\ \rule{0.5em}{0.5em}}
\newcommand{\comment}[1]{}     % comment{}
\begin{document}

\title{Dengue model with early-life stage of vectors and age-structure within host}
% title only in small caps

\maketitle

\begin{center}
\author{Fabio Sanchez\footnote{Centro de Investigaci\'on en Matem\'atica Pura y Aplicada (CIMPA), Escuela de Matem\'atica, Universidad de Costa Rica. San Pedro de Montes de Oca, San Jos\'e, Costa Rica, 11501. Email: \url{fabio.sanchez@ucr.ac.cr}} and Juan G. Calvo\footnote{Centro de Investigaci\'on en Matem\'atica Pura y Aplicada (CIMPA), Escuela de Matem\'atica, Universidad de Costa Rica. San Pedro de Montes de Oca, San Jos\'e, Costa Rica, 11501Email: \url{juan.calvo@ucr.ac.cr}}}
\end{center}

\begin{abstract}
We construct an epidemic model for the transmission of dengue fever with an early-life stage in the vector dynamics and age-structure within hosts. The early-life stage of the vector is modeled via a general function that supports multiple vector densities. The {\it basic reproductive number} and {\it vector demographic threshold} are computed to study the local and global stability of the infection-free state. A numerical framework is implemented and simulations are performed.
\end{abstract}

\section{Introduction} \label{sec:intro}
Dengue fever has been a burden to public health officials in the tropics and subtropics since the 20th century \cite{cdc,harris2000}. Dengue virus, of the genus \textit{Flavivirus} of the family Flaviviridae, is an infectious disease transmitted by the mosquitoes {\it Aedes aegypti} and {\it Aedes albopictus} \cite{gubler1998}. There are four serotypes of the dengue virus, called DEN-1, DEN-2, DEN-3, and DEN-4. After infection of one serotype, the infected person acquires lifelong immunity for that specific serotype and short-term immunity to other serotypes \cite{cdc}. 

%%%REESCRIBIR ESTA PARRAFO...TA HORRIBLE!!! (FABIO)
There are two stages in the transmission cycle of dengue that have been reported. There is an enzootic transmission cycle between primates mostly in forests, with transmission between vectors through feeding from infected animals \cite{gubler1998}. These infected mosquitoes rarely wander far from the forest, so the infection to human populations comes from humans or livestock who visit a forest with presence of the virus, encounter an infected vector, become infected, and then infect the mosquitoes in their population center, which then can spread the disease to the rest of the population. In the case of rural, small populations, since the population rapidly gets saturated with the infection and subsequently immunized, the epidemic usually is short-termed. The other stage is between vectors and humans, where vectors bite an infected human and can potentially become infected. In this work we will focus on the interaction between vectors and humans.

%In Costa Rica, there have been more than 370,000 reported cases of dengue since 1993 \cite{mscr}, most of which come from the coastal areas, typically warmer regions \cite{mena2011}.

The model in \cite{sanchez2006} focuses on the early-life stage of the vector and explores the effect of multiple vector densities on dengue outbreaks. Previous work on dengue models mostly focus on the adult vector-host interactions; see, e.g., \cite{feng1997,esteva1998,esteva1999,sanchez2012,manore2014,murillo2014,brauer2016,sanchez2018}.

The model we present here incorporates age-structure within the host population, as well as the early-life stage of the vector as in \cite{sanchez2006}. The inclusion of age-structure in the human/host population can help to determine prevention and control strategies based on specific population age groups and other social factors inherent to a subgroup of the population at risk. 

This article is organized as follows. In Section \ref{sec:model}, we outline the mathematical model. In Section \ref{sec:R0}, we compute the {\it vector demographic threshold}, the {\it basic reproductive number}, and determine the stability of the system. Section \ref{sec:num} includes the numerical scheme and numerical simulations, confirming the theoretical results. Finally, in Section \ref{sec:disc} we present some relevant conclusions and final thoughts.

\section{Mathematical model} \label{sec:model}
We consider a compartmental model with age-structure within the host population, with susceptible, infectious and recovered hosts, denoted by $S_h(t,a)$, $I_h(t,a)$ and $R_h(t,a)$, respectively.
%. Susceptible hosts of age $a$ at time $t$ is denoted by $S_h(t,a)$, infectious hosts are denoted by $I_h(t,a)$ and recovered hosts are denoted by $R_h(t,a)$. 
Vectors are described by three states: $E(t)$ (egg/larvae at time $t$), $S_v(t)$ (number of non-infected vectors) and $I_v(t)$ (number of infected vectors). 

Hosts and vectors are coupled via a transmission process, where susceptible hosts can become infected at rate $\beta(a) \frac{I_v(t)}{N_v(t)}$, where $\beta(a)$ represents the age-dependent contact rate (vector-human) and $N_v(t)=S_v(t)+I_v(t)$ is the total number of vectors in the system. The number of new hosts coming into the system, $\Lambda$, is assumed to be constant. Infected individuals can recover at rate $\gamma(a)$,
%. Susceptible, infected and recovered individuals 
and all hosts exit the system at rate $\mu_h(a)$.

% copied from immigration
%The age-dependent contact structure is modeled via $p(t,a,a')$, the mixing density, which gives the proportion of contact that individuals of age $a$ have with individuals of age $a'$ given that they had contact with somebody at time $t$. 

We will restrict ourselves to the case of proportional mixing: $$ p(t,a) =  \frac{c(a)n(t,a)}{\int_0^\infty c(a) n(t,a)\ da}, $$ with $c(a)$ the age-specific per-capita contact/activity rate. We then define the force of infection $$B(t) = \int_0^\infty \frac{I_h(t,a)}{n(t,a)} p(t,a)da,$$ where $n(t,a)= S_h(t,a)+I_h(t,a)+R_h(t,a)$ is the population density and $\int_{0}^{\infty}n(t,a)da$ is the total population. 

In the vector classes, we consider that eggs enter the system at rate $f(N_v)$, where $f(N_v)$ is assumed to be a Kolmogorov type function, $f(N_v)=N_v g(N_v)$, with $g:\mathbb{R}^+\rightarrow \mathbb{R}^+$ a differentiable function such that $g(0)>0$, $g(\infty)=0$. They also exit the system at rate $\mu_e$ and become mosquitoes at rate $\delta$. Mosquitoes become infectious at rate $\beta_v B(t)$, where $\beta_v$ is the transmission rate from humans to mosquitoes and $B(t)$ is the force of infection. Mosquitoes also exit the system at rate $\mu_v$. In our analysis, we assume that parameters $\delta, \mu_e, \mu_v, \beta_v$ are constant and $\beta, \mu_h, \gamma$ are continuous functions on age.

The model we just described is given by the system of differential equations:
\begin{subequations} \label{eq:system}
\begin{align} 
\begin{split}
\frac{dE}{dt} &= f(N_v)-(\delta+\mu_e)E,
\end{split}\\
\begin{split}
\frac{dS_v}{dt} &= \delta E-\beta_v S_v B(t) -\mu_v S_v,
\end{split}\\
\begin{split}
\frac{dI_v}{dt} &= \beta_v S_v B(t)-\mu_v I_v,
\end{split}\\
\begin{split}
\Big(\frac{\partial}{\partial t}+\frac{\partial}{\partial a}\Big)S_h(t,a) &= -\beta(a)S_h(t,a)\frac{I_v}{N_v}-\mu_h(a)S_h(t,a),
\end{split}\\
\begin{split}
\Big(\frac{\partial}{\partial t}+\frac{\partial}{\partial a}\Big)I_h(t,a) &= \beta(a)S_h(t,a)\frac{I_v}{N_v}-(\mu_h(a)+\gamma(a))I_h(t,a),
\end{split}\\
\begin{split}
\Big(\frac{\partial}{\partial t}+\frac{\partial}{\partial a}\Big)R_h(t,a) &= \gamma(a)I_h(t,a)-\mu_h(a)R_h(t,a),
\end{split}
\end{align}
along with initial conditions given by
\begin{equation*}
\begin{array}{rlrlrl}
E(0) &= E_0, & S_v(0) &=S_{v_0}, & I_v(0) &=  I_{v_0} \\
S_h(t,0) &= \Lambda, & I_v(t,0) &=0, & R_h(t,0)&=0,\\
S_h(0,a) &= S_{h_0}(a), & I_h(0,a) &= I_{h_0}(a), & R_h(0,a)&=R_{h_0}(a).
\end{array}    
\end{equation*}
\end{subequations}

The total host population satisfies $$\Big(\frac{\partial}{\partial t}+\frac{\partial}{\partial a}\Big)n(t,a) = -\mu_h(a) n(t,a),$$
and we then can compute explicitly that 
\begin{equation} \label{eq_nta}
n(t,a) = 
\left\lbrace
\begin{array}{cl}
n_0(a-t) \dfrac{\mathcal{F}(a)}{\mathcal{F}(a-t)} & {\rm if }\ a\geq t,  \\
\Lambda \mathcal{F}(a) & {\rm if }\ a<t,  \\
\end{array}
\right.
\end{equation}
where $$\mathcal{F}(a) = e^{-\int_0^a \mu_h(s)\ ds}$$
is the proportion of individuals that survive at age $a$. Therefore, we define
\begin{align*}
    n^*(a) &:= \lim_{t\rightarrow \infty} n(t,a) = \Lambda \mathcal{F}(a),\\
    p_\infty(a) &:=  \lim_{t\rightarrow \infty} p(t,a) = \frac{c(a)\mathcal{F}(a)}{\int_0^\infty c(a) \mathcal{F}(a)\ da}.
\end{align*}
%We remark that if $\mu_h$ and $c$ are constant, then $$n^*(a) = \Lambda e^{-\mu_h a},\quad p_{\infty}(a) = \mu_h e^{-\mu_h a}.$$

\section{Model analysis}\label{sec:R0}
In this section we explore the conditions for multiple vector demographic steady states and determine their stability. We also compute  the {\it basic reproductive number} and analyze local and global stability for the solutions of System \eqref{eq:system}.

\subsection{Vector demographic number, $\mathcal{R}_v$} 
%In order to establish the stability of equilibria we compute the {\it basic reproductive number}, $\mathcal{R}_0$, and the {\it vector demographic threshold}, $\mathcal{R}_v$. 
Since $n(t,a)$ is given explicitly in \eqref{eq_nta}, we first rescale variables 
$$s_h(t,a) = \frac{S_h(t,a)}{n(t,a)},\quad i_h(t,a) = \frac{I_h(t,a)}{n(t,a)},\quad r_h(t,a) = \frac{R_h(t,a)}{n(t,a)},$$ 
to obtain the equivalent system
\begin{subequations} \label{age_vsys_2}
\begin{align}
\begin{split}
\frac{dE}{dt} &= f(N_v)-(\delta+\mu_e)E ,
\end{split}\\
\begin{split}
\frac{dS_v}{dt} &= \delta E-\beta_v  B(t) S_v -\mu_v S_v,
\end{split}\\
\begin{split} \label{eq_Iv}
\frac{dI_v}{dt} &= \beta_v B(t) S_v-\mu_v I_v,
\end{split}\\
\begin{split}
N_v &= S_v+I_v,
\end{split}\\
\begin{split}
\Big(\frac{\partial}{\partial t}+\frac{\partial}{\partial a}\Big)s_h(t,a) &= -\beta(a)s_h(t,a)\frac{I_v}{N_v},
\end{split}\\
\begin{split} \label{eq_ih}
\Big(\frac{\partial}{\partial t}+\frac{\partial}{\partial a}\Big)i_h(t,a) &= \beta(a)s_h(t,a)\frac{I_v}{N_v}-\gamma(a)i_h(t,a),
\end{split}\\
\begin{split}
\Big(\frac{\partial}{\partial t}+\frac{\partial}{\partial a}\Big)r_h(t,a) &= \gamma(a)i_h(t,a),
\end{split}\\
\begin{split}
B(t) &= \int_0^{\infty}p(t,a)i_h(t,a)da.
\end{split}
\end{align}
\end{subequations}
For a given equilibrium state $(E^*, S_{v}^{*}, I_{v}^{*},N_{v}^{*}, s_{h}^{*}(a), i_{h}^{*}(a), r_{h}^{*}(a),B^*)$ of System \eqref{age_vsys_2}, we study its local stability by using the perturbations
\begin{eqnarray*} 
E(t) &=& E^*+e^{\psi t}\widetilde{E},\\
S_v(t) &=& S_{v}^{*}+e^{\psi t}\widetilde{S_v},\\
I_v(t) &=& I_{v}^{*}+e^{\psi t}\widetilde{I_v},\\% 
N_v(t) &=& N_{v}^{*}+e^{\psi t}\widetilde{N_v},\\
s_h(t,a) &=& s_{h}^{*}(a)+e^{\psi t}\widetilde{s_h}(a),\\
i_h(t,a) &=& i_{h}^{*}(a)+e^{\psi t}\widetilde{i_h}(a),\\
r_h(t,a) &=& r_{h}^{*}(a)+e^{\psi t}\widetilde{r_h}(a),\\
B(t) &=& B^* +  e^{\psi t}\widetilde{B},
\end{eqnarray*}
where $$B^* = \int_0^\infty p_\infty(a) i_{h}^{*}(a)\ da, \quad \widetilde{B} =  \int_0^\infty p_\infty(a) \widetilde{i_h}(a)\ da.$$

Linearization leads to the eigenvalue problem
\begin{subequations}
\begin{align}
    \psi \widetilde{E} &= f^\prime(N_{v}^{*})\widetilde{N_v}-(\mu_e+\delta)\widetilde{E} \label{eqE},\\
    \psi \widetilde{N_v} &= \delta \widetilde{E} - \mu_v \widetilde{N_v} \label{eqL},\\
    \psi \widetilde{I_v} &= \beta_v (\widetilde{S_v}B^* + S_{v}^{*}  \widetilde{B} ) -  \mu_v \widetilde{I_v} \label{eqJ},\\
    \psi \widetilde{S_v} &=\delta \widetilde{E}-\beta_v ( B^* \widetilde{S_v}+ S_{v}^{*}  \widetilde{B})-\mu_v \widetilde{S_v},\\
     \dfrac{d}{da}\widetilde{s_h}(a) +  \psi \widetilde{s_h}(a) &= -\beta(a)\left( \frac{I_{v}^{*}}{N_{v^*}} \widetilde{s_h}(a) +  \frac{\widetilde{I_v}}{N_{v}^{*}}s_{h}^{*}(a) -\frac{I_{v}^{*}}{N_{v}^{*}} \frac{\widetilde{N_v}}{N_{v}^{*}} s_{h}^{*}(a)\right),\\
     \dfrac{d}{da}\widetilde{i_h}(a) +  \psi \widetilde{i_h}(a) &=\beta(a)\left( \frac{I_{v}^{*}}{N_{v}^{*}} \widetilde{s_h}(a) +  \frac{\widetilde{I_v}}{N_{v}^{*}}s_{h}^{*}(a) -\frac{I_{v}^{*}}{N_{v}^{*}} \frac{\widetilde{N_v}}{N_{v}^{*}} s_{h}^{*}(a)\right)-\gamma(a)\widetilde{i_h}(a) \label{eqI},\\
     \dfrac{d}{da}\widetilde{r_h}(a) +  \psi \widetilde{r_h}(a) &=\gamma(a)\widetilde{i_h}(a).
\end{align}
\end{subequations}
For $\widetilde{E}, \widetilde{N_v} \neq 0$, equations \eqref{eqE} and \eqref{eqL} imply that
\begin{equation*}
    \dfrac{\delta f'(N_{v}^*)}{(\mu_e+\delta+\psi)(\psi+\mu_v )} = 1.
\end{equation*}
Let $$\mathcal{H}_v(\psi) = \dfrac{\delta f'(N_{v}^*)}{(\mu_e+\delta+\psi)(\psi+\mu_v )}.$$
We then define the {\it demographic vector} number
\begin{equation*}
\mathcal{R}_v = {\cal H}_v(0) = \dfrac{f'(N_{v}^*)}{\phi},    
\end{equation*}
where $$\phi = \dfrac{(\delta+\mu_e) \mu_v}{\delta}$$
represents the proportion of eggs that survive to the adult stage. Recall that the rate that eggs enters the system is given by $f(N_v)=N_v g(N_v)$. We can then establish the following result:

\begin{lemma} \label{lem:stabNv}
Suppose that the set $g^{-1}(\phi) = \lbrace N_v\in (0,+\infty):g(N_v) = \phi \rbrace$ is non-empty. For each $N_v \in g^{-1}(\phi)$, there exists a positive vector state 
\begin{equation}\label{eq_eqState}
(E^*, N_v^*)=\left(\dfrac{\mu_v}{\delta} N_v, N_v \right),   
\end{equation}
which is locally asymptotically stable if $\mathcal{R}_v<1$ and unstable otherwise.
\end{lemma}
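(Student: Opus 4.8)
The plan is to observe that the egg--vector dynamics close off into a planar autonomous subsystem, identify its equilibria, and then apply a Routh--Hurwitz (equivalently, trace--determinant) criterion to the resulting $2\times 2$ linearization.

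First I would extract the demographic subsystem. Adding the $S_v$ and $I_v$ equations in \eqref{age_vsys_2} cancels the transmission terms $\pm\beta_v B(t) S_v$ and yields $dN_v/dt = \delta E - \mu_v N_v$. Together with the egg equation $dE/dt = f(N_v)-(\delta+\mu_e)E$, the pair $(E,N_v)$ satisfies a closed planar system that is independent of the host classes and of the $S_v$/$I_v$ split; this is precisely the nonlinear block whose linearization appears as \eqref{eqE}--\eqref{eqL}. Setting the right-hand sides to zero, the $N_v$-equation gives $E^*=(\mu_v/\delta)N_v^*$, and substituting into the egg equation together with $f(N_v)=N_v g(N_v)$ gives $N_v^*\big(g(N_v^*)-\phi\big)=0$. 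For a positive state this forces $g(N_v^*)=\phi$, i.e.\ $N_v^*\in g^{-1}(\phi)$, which is nonempty by hypothesis, and reproduces the equilibrium \eqref{eq_eqState}.

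Next I would linearize at $(E^*,N_v^*)$. The Jacobian of the planar system is
\[
J=\begin{pmatrix}-(\delta+\mu_e) & f'(N_v^*)\\[2pt] \delta & -\mu_v\end{pmatrix},
\]
whose characteristic equation $(\psi+\mu_e+\delta)(\psi+\mu_v)=\delta f'(N_v^*)$ is exactly $\mathcal{H}_v(\psi)=1$. Expanding gives the quadratic $\psi^2+a_1\psi+a_0=0$ with $a_1=\mu_e+\delta+\mu_v$ and, using $(\delta+\mu_e)\mu_v=\delta\phi$, with $a_0=\delta\phi\big(1-\mathcal{R}_v\big)$. Since all rates are positive, $a_1>0$ unconditionally, while $a_0>0$ if and only if $\mathcal{R}_v<1$. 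By Routh--Hurwitz both roots then have negative real part, so the state is locally asymptotically stable; when $\mathcal{R}_v>1$ we have $a_0<0$, the product of the roots is negative, hence one eigenvalue is real and positive, giving instability (a saddle).

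The only delicate point is the borderline case $\mathcal{R}_v=1$, where $a_0=0$ produces a zero eigenvalue and the linearization is inconclusive; I would read the lemma's ``unstable otherwise'' as the strict case $\mathcal{R}_v>1$, noting that $\mathcal{R}_v=1$ would require a separate (center-manifold or direct) argument tied to the sign of $g'(N_v^*)$. As a sanity check I would also record that $f'(N_v^*)=\phi+N_v^*g'(N_v^*)$, so $\mathcal{R}_v<1\iff g'(N_v^*)<0$, matching the intuition that only the downward-sloping branches of $g$ at level $\phi$ are stable; differentiability of $g$ guarantees $f'(N_v^*)$ is well defined throughout.
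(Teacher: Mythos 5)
Your proof is correct and follows essentially the same route as the paper: the same closed planar $(E,N_v)$ subsystem, the same equilibrium computation via $f(N_v^*)=N_v^*g(N_v^*)$, and the same $2\times 2$ Jacobian, with the only cosmetic difference being that you conclude via the Routh--Hurwitz coefficients $a_1>0$, $a_0=\delta\phi(1-\mathcal{R}_v)$ while the paper writes out the two eigenvalues explicitly. If anything, your version is slightly more complete, since you also prove instability (a saddle) when $\mathcal{R}_v>1$ and flag the degenerate case $\mathcal{R}_v=1$, both of which the paper's proof leaves implicit.
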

\begin{proof}
We have that $(E,N_v)$ satisfies the system
\begin{align} \label{eq:pde_E_Sv}
\frac{dE}{dt} &= f(N_v)-(\delta+\mu_e)E,\nonumber \\
\frac{dN_v}{dt} &= \delta E -\mu_v N_v,
\end{align}
with appropriate initial conditions. Therefore, fixed points satisfy
\begin{align*} %\label{eq:pde_E_Sv}
f(N_v^*) &= (\delta+\mu_e)E^*,\nonumber \\
\delta E^* &=\mu_v N_v^*.
\end{align*}
Multiplying both equations and using the fact that $f(N_v^*)=N_v^* g(N_v^*)$, we then deduce that $g(N_v^*)= \phi$ (for $E^*N_v^*\neq 0$).  Thus, for each $N_v \in g^{-1}(\phi)$ there exists the positive state given in \eqref{eq_eqState}. 

Moreover, for a fixed state \eqref{eq_eqState}, the associated Jacobian to System \eqref{eq:pde_E_Sv} is given by
\begin{equation*}
    \left[ 
    \begin{array}{cc}
        -(\delta +\mu_e) & f'(N_v^*)  \\
        \delta & -\mu_v
    \end{array}
    \right],
\end{equation*}
which eigenvalues are given by $$\frac{1}{2} \left( -(\delta+\mu_e+\mu_v) \pm \sqrt{(\delta+\mu_e+\mu_v)^2-4\mu_v(\mu_e+\delta)\left(1-\mathcal{R}_v \right)} \right).$$
If $R_v<1$, 
%then $\mu_v(\mu_e+\delta)-\delta f'(N_v^*) >0$. W
we then conclude that both eigenvalues have negative real part and \eqref{eq_eqState} is locally stable.
\end{proof}
\begin{remark}{\rm \label{rem_gNv}
Since we assume that $f(N_v)=N_vg(N_v)$, it is straightforward to verify that %$f'(N_v^*) = \phi + N_v^*g'(N_v^*)$. We then obtain
$$\mathcal{R}_v = 1+g'(N_v^*) \dfrac{N_v^*}{\phi}.$$ 
Thus, equilibrium points given by \eqref{eq_eqState} are locally stable if $g'(N_v^*)<0$, and unstable otherwise; see Figure \ref{fig:stabNv*}.
}
\end{remark}

\begin{figure}[t!]
\centering
\includegraphics[width=.7\textwidth]{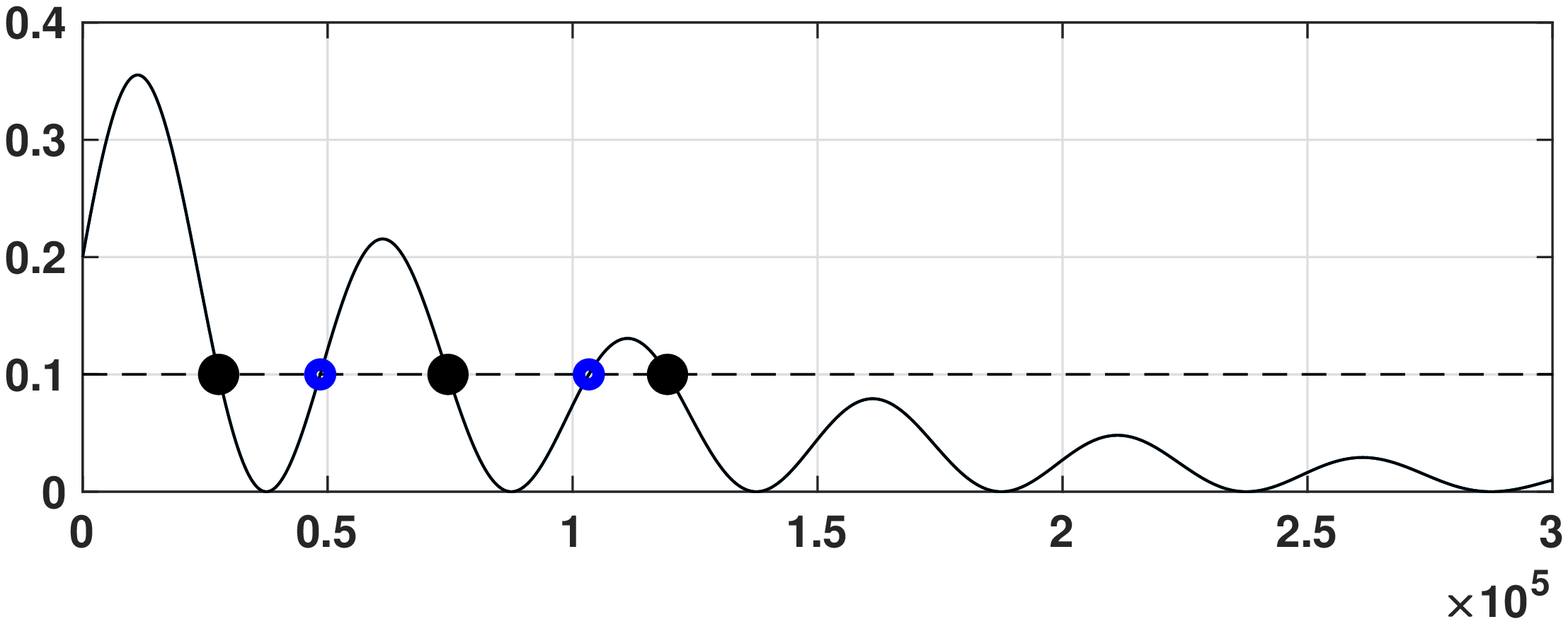}
\caption{An example for $g(N_v)$ as a function of $N_v$ for which multiple steady states exist. The dashed line corresponds to the value of $\phi$. Each intersection of both curves corresponds to an endemic state $ N_v^* \in g^{-1}(\phi)$. Black filled dots correspond to stable points since $g'(N_v^*)<0$ and circles correspond to unstable fixed points; see Lemma \ref{lem:stabNv} and Remark \ref{rem_gNv}. In this case, $N_v^*=0$ is unstable; see Lemma \ref{lem:Rv0}. \label{fig:stabNv*}}
\end{figure}

\begin{lemma} \label{lem:Rv0}
The vector-free state $(E^*,S_v^*,I_v^*) = (0,0,0)$ is locally asymptotically stable if $g(0)<\phi$, and unstable otherwise.
\end{lemma}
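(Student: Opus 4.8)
The plan is to linearize the vector subsystem $(E,S_v,I_v)$ about the origin, reusing the closed structure already exploited in the proof of Lemma~\ref{lem:stabNv}. The starting observation is that adding the $S_v$ and $I_v$ equations eliminates the transmission term and yields the autonomous pair \eqref{eq:pde_E_Sv} for $(E,N_v)$, and that the vector-free state is exactly the case $N_v^*=0$. Since $f(N_v)=N_v g(N_v)$ we have $f'(0)=g(0)$, so the relevant Jacobian entry specializes from the generic $f'(N_v^*)$ of the previous lemma to $g(0)$. At the vector-free state the host population is disease-free, so $B^*=0$, and since $S_v^*=0$ the bilinear transmission term $\beta_v S_v B$ contributes nothing to the linearization.

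With this in mind, I would write the $3\times 3$ Jacobian of $(E,S_v,I_v)$ at $(0,0,0)$,
\begin{equation*}
J=\begin{pmatrix}
-(\delta+\mu_e) & g(0) & g(0)\\
\delta & -\mu_v & 0\\
0 & 0 & -\mu_v
\end{pmatrix},
\end{equation*}
whose last row decouples. Expanding the characteristic polynomial along this row yields the factorization
\begin{equation*}
\det(J-\psi I)=-(\psi+\mu_v)\Big[(\psi+\delta+\mu_e)(\psi+\mu_v)-\delta g(0)\Big],
\end{equation*}
so one eigenvalue is $\psi=-\mu_v<0$ and the remaining two solve precisely the quadratic of Lemma~\ref{lem:stabNv} with $f'(N_v^*)$ replaced by $f'(0)=g(0)$.

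The conclusion then follows from a sign analysis of $\psi^2+(\delta+\mu_e+\mu_v)\psi+\big[(\delta+\mu_e)\mu_v-\delta g(0)\big]=0$. Its linear coefficient is strictly positive, so by Routh--Hurwitz both roots have negative real part exactly when the constant term is positive, i.e.\ when $(\delta+\mu_e)\mu_v>\delta g(0)$, which rearranges to $g(0)<\phi$ using $\phi=(\delta+\mu_e)\mu_v/\delta$; equivalently $\mathcal{R}_v=g(0)/\phi<1$ at this state, mirroring the previous lemma. Conversely, if $g(0)>\phi$ the constant term is negative, the two roots are real with opposite signs, and the resulting positive eigenvalue makes the origin unstable.

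The step I expect to require the most care is the indeterminate ratio $I_v/N_v$ as $N_v\to 0$: the transmission terms are formally $0/0$ at the vector-free state. In the vector equations this is harmless, since the only host coupling is $\beta_v S_v B$, which linearizes to $\beta_v(B^*\widetilde{S_v}+S_v^*\widetilde{B})=0$ because $S_v^*=0$ and, at the disease-free host state, $B^*=0$. The cleanest route is in fact to run the whole argument on the closed pair \eqref{eq:pde_E_Sv} for $(E,N_v)$, which never sees the ratio, and to note that the decoupled $I_v$ direction only adds the stable eigenvalue $-\mu_v$. Since $S_v,I_v\ge 0$ and $N_v=S_v+I_v$, asymptotic stability of $(E,N_v)=(0,0)$ transfers to the full vector-free state, while the host components, whose linearization genuinely involves the singular ratio, lie outside the scope of this lemma.
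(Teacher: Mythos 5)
Your proof is correct and takes essentially the same route as the paper's: both linearize the closed $(E,N_v)$ subsystem at the origin using $f'(0)=g(0)$, arrive at the same characteristic quadratic---which the paper writes as $\mathcal{H}_v(\psi)=1$ and analyzes by monotonicity where you apply Routh--Hurwitz---and transfer stability to the full state $(E^*,S_v^*,I_v^*)=(0,0,0)$ via nonnegativity and $N_v=S_v+I_v$. Your explicit handling of the decoupled $-\mu_v$ eigenvalue, the vanishing of the $\beta_v S_v B$ term (since $S_v^*=0$, $B^*=0$), and the $0/0$ ratio $I_v/N_v$ spells out details the paper leaves implicit, but it is the same argument.
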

\begin{proof}
Suppose that $g(0)<\phi$. For $N_v^*=0$, $\mathcal{R}_v$ simplifies to $\mathcal{R}_v = \dfrac{g(0)}{\phi}<1$. Since ${\cal H}_v: [0,\infty)$ is a decreasing function of $\psi$, the equation $\mathcal{H}_v(\psi)=1$ can have only solutions with negative real part, and $(E^*,N_v^*)$ is locally stable. The result then follows since $S_v^*, I_v^*$ are non negative and $N_v^* = S_v^* + I_v^*$. If $g(0)>\phi$, then $\mathcal{H}_v(\psi)=1$ has one positive solution and the result holds.
\end{proof}

%$$ for system \eqref{age_vsys_2}
\subsection{Basic reproductive number, $\mathcal{R}_0$}
Consider now the disease-free state $$(E^*, S_{v}^{*}, I_{v}^{*},N_{v}^{*}, s_{h}^{*}(a), i_{h}^{*}(a), r_{h}^{*}(a),B^*) = (E^*,S_v^*,0,S_v^*,1,0,0,0)$$ for System \eqref{age_vsys_2}.
From \eqref{eqJ} we get 
$$\dfrac{\widetilde{I_v}}{N_{v}^*} = \dfrac{\beta_v \widetilde{B}}{\psi + \mu_v}.$$
% In general
%$$\dfrac{\widetilde{J}}{L^*} = \dfrac{\beta_v \widetilde{B}}{\alpha_v B^*+\psi + \mu_v}.$$
Substituting in \eqref{eqI} and solving, we obtain that
\begin{equation*}
    \widetilde{{i}(a)} = \dfrac{\beta_v \widetilde{B}}{\psi + \mu_v} \int_0^a \beta(\tau) e^{-\int_\tau^a (\psi + \gamma(h))\ dh}\ d\tau.
\end{equation*}
Multiplying by $p_\infty(a)$ and integrating with respect to $a$, we deduce that
\begin{equation*}
    \widetilde{B} = \dfrac{\beta_v \widetilde{B}}{\psi + \mu_v} \int_0^\infty \int_0^a p_\infty(a) \beta(\tau) e^{-\int_\tau^a (\psi + \gamma(h))\ dh}\ d\tau\ da.
\end{equation*}
For $\widetilde{B}\neq 0$, we obtain that
\begin{equation*}
    \dfrac{\beta_v}{\psi+\mu_v} \int_0^\infty \int_0^a p_\infty(a) \beta(\tau) e^{-\int_\tau^a (\psi + \gamma(h))\ dh}\ d\tau\ da = 1.
\end{equation*}
Let
\begin{equation*}
    G(\psi) = \dfrac{\beta_v}{\psi+\mu_v} \int_0^\infty \int_0^a p_\infty(a) \beta(\tau) e^{-\int_\tau^a (\psi + \gamma(h))\ dh}\ d\tau\ da.
\end{equation*}
We then define the {\it basic reproductive number} %$\mathcal{R}_0$ as
\begin{equation} \label{eq_defR0}
    \mathcal{R}_0= G(0) = \dfrac{\beta_v}{\mu_v} \int_0^\infty \int_0^a p_\infty(a) \beta(\tau) e^{-\int_\tau^a \gamma(h)\ dh}\ d\tau\ da.
\end{equation}
In the particular case of constant parameters, it reduces to
\begin{equation*}
    \mathcal{R}_0 = \dfrac{\beta_v \beta}{\mu_v (\mu_h+\gamma)}.
\end{equation*}
We then have the following results:

\begin{theorem} \label{th:R0} %que eficiencia
Assume that $\mathcal{R}_0 < 1$. Then, the disease-free solution of System \eqref{age_vsys_2} is globally asymptotically stable.
\end{theorem}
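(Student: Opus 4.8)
The plan is to prove global attractivity in two stages: first reduce the decoupled vector demography, and then control the infection by a Lyapunov functional whose age-weight is dictated precisely by $\mathcal{R}_0$. To begin I would record the a priori bounds that make the problem tractable: nonnegativity of all states, the conservation identity $s_h+i_h+r_h=1$ (so $0\le i_h\le 1$), and $0\le I_v\le N_v$. Adding the $S_v$ and $I_v$ equations shows that $(E,N_v)$ solves the closed planar system \eqref{eq:pde_E_Sv}, which is decoupled from the infection. By Lemma \ref{lem:stabNv} together with the planar phase-plane analysis for the Kolmogorov nonlinearity, its solutions converge to the relevant stable positive demographic equilibrium $(E^*,N_v^*)$ with $N_v^*\in g^{-1}(\phi)$; hence along disease-free trajectories $S_v=N_v-I_v\to N_v^*$ and $p(t,a)\to p_\infty(a)$. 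The full system is therefore asymptotically autonomous, and it suffices to analyze the limiting autonomous system with $N_v\equiv N_v^*$, $S_v=N_v^*-I_v$ and $p\equiv p_\infty$, transferring the conclusion back by the theory of asymptotically autonomous semiflows.

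For the limit system I would introduce the age-weight
\[
\omega(a)=\int_a^\infty p_\infty(\sigma)\,e^{-\int_a^\sigma \gamma(h)\,dh}\,d\sigma,
\]
which satisfies the identity $\omega'(a)=\gamma(a)\omega(a)-p_\infty(a)$, and the candidate functional
\[
V(t)=I_v(t)+\beta_v N_v^*\int_0^\infty \omega(a)\,i_h(t,a)\,da .
\]
Differentiating along solutions, integrating the transport term by parts (the absorbing boundary condition $i_h(t,0)=0$ kills the boundary contribution), and substituting $\omega'=\gamma\omega-p_\infty$, the whole $i_h$-contribution collapses to $-\beta_v N_v^* B(t)$, which cancels the force-of-infection term arising from $\dot I_v$. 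The decisive computation is that, after interchanging the order of integration (Fubini), $\int_0^\infty \omega(a)\beta(a)\,da=\tfrac{\mu_v}{\beta_v}\mathcal{R}_0$ exactly, matching the double integral in the definition \eqref{eq_defR0}. Using $S_v\le N_v^*$ and $s_h\le 1$, what survives is the clean estimate
\[
\dot V \le \mu_v(\mathcal{R}_0-1)\,I_v \le 0 \qquad\text{whenever } \mathcal{R}_0\le 1,
\]
with equality forcing $I_v=0$.

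Finally I would invoke LaSalle's invariance principle. On the set $\{\dot V=0\}=\{I_v=0\}$, the largest invariant set satisfies $\dot I_v=0$, hence $B\equiv 0$, and since $p_\infty>0$ and $i_h\ge 0$ this yields $i_h\equiv 0$, i.e. the disease-free state. Convergence then follows. Combined with local asymptotic stability — read off from the characteristic equation $G(\psi)=1$, because $G$ is real and strictly decreasing on $[0,\infty)$ with $G(0)=\mathcal{R}_0<1$, while $|G(\psi)|\le G(\mathrm{Re}\,\psi)$ rules out any root with $\mathrm{Re}\,\psi\ge 0$ — this delivers global asymptotic stability of the disease-free solution.

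I expect the main obstacle to be the rigorous use of LaSalle's principle in the infinite-dimensional ($L^1$/PDE) setting, which requires precompactness of orbits: one must verify that the semiflow generated by \eqref{age_vsys_2} is asymptotically smooth, so that $\omega$-limit sets are nonempty, compact and invariant. The smoothing inherent in the $i_h$-equation (the boundary renewal at $a=0$ together with mortality and recovery losses) should supply this, but it is the one step needing genuine care. A secondary technicality is justifying the asymptotically-autonomous reduction, so that attractivity in the limit system transfers to the original non-autonomous trajectories; the Lyapunov construction itself is then routine once the weight $\omega$ is identified.
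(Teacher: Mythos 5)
Your proposal is correct in its core computations, but it takes a genuinely different route from the paper. The paper's proof is more elementary: it integrates \eqref{eq_ih} along characteristics to express $i_h(t,a)$ (for $t>a$) in terms of past values of $s_h\, I_v/N_v$, multiplies by $p(t,a)$ and integrates in $a$ to obtain a renewal-type inequality for the force of infection, bounds $s_h\le 1$, and then works directly with limiting (starred) quantities: combining the resulting bound $B^*\le \frac{I_v^*}{N_v^*}\,\frac{\mu_v}{\beta_v}\,\mathcal{R}_0$ with the steady-state relation $B^*=\frac{\mu_v}{\beta_v}\,\frac{I_v^*}{S_v^*}$ from \eqref{eq_Iv} and $S_v^*\le N_v^*$ gives the squeeze $\frac{I_v^*}{N_v^*}\le\frac{I_v^*}{S_v^*}\le \mathcal{R}_0\,\frac{I_v^*}{N_v^*}$, forcing $I_v^*=0$, hence $B^*=0$ and $i_h^*=0$. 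Your Lyapunov/LaSalle construction replaces this squeeze by a dissipation inequality $\dot V\le \mu_v(\mathcal{R}_0-1)I_v$; note that your Fubini identity $\int_0^\infty\omega(a)\beta(a)\,da=\frac{\mu_v}{\beta_v}\mathcal{R}_0$ is exactly the double integral in \eqref{eq_defR0}, so both arguments hinge on the same kernel, deployed differently. The trade-off: the paper's argument needs no functional-analytic machinery, but it is informal about the passage to the starred limits (implicitly a fluctuation-type argument that trajectories settle into values obeying the equilibrium relations); your proof follows the standard rigorous template for age-structured systems and yields an explicit decay estimate, at the price of the two technical debts you correctly flag (asymptotic smoothness of the semiflow for LaSalle, and the asymptotically autonomous reduction). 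Two smaller points on your reduction step: Lemma \ref{lem:stabNv} is only local, so to know that the planar system \eqref{eq:pde_E_Sv} converges at all you should invoke Bendixson--Dulac (its divergence is $-(\delta+\mu_e)-\mu_v<0$, so there are no periodic orbits) together with boundedness of orbits, and accept that the limit $N_v^*$ depends on the initial data --- which is harmless, since your weight $\omega$ does not involve $N_v^*$; finally, both your proof and the paper's gloss over the degenerate case $N_v\to 0$, where $I_v/N_v$ is indeterminate and your limit system (as well as the paper's starred relations) is not defined.
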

\begin{proof}
From \eqref{eq_ih} we have that 
\begin{equation*}
    i_h(t,a)=\int_0^a \beta(\tau) e^{-\int_\tau^a \gamma(h)\ dh}  s_h(\tau+t-a,\tau) \dfrac{I_v(\tau+t-a)}{N_v(\tau+t-a}\ d\tau
\end{equation*}
for $t>a$. Multiplying by $p(t,a)$ and integrating with respect to $a$ we get
\begin{equation*}
        B(t) =\int_0^\infty \int_0^a \beta(\tau) p(t,a) e^{-\int_\tau^a \gamma(h)\ dh}  s_h(\tau+t-a,\tau) \dfrac{I_v(\tau+t-a)}{N_v(\tau+t-a)}\ d\tau.
\end{equation*}
Since $s(t,a)\leq 1$,
\begin{equation*}
        B(t) \leq \int_0^\infty \int_0^a \beta(\tau) p(t,a) e^{-\int_\tau^a \gamma(h)\ dh} \dfrac{I_v(\tau+t-a)}{N_v(\tau+t-a)}\ d\tau,
\end{equation*}
and therefore 
\begin{equation}\label{eq2}
        B^* \leq \dfrac{I_v^*}{N_v^*} \int_0^\infty \int_0^a \beta(\tau) p_\infty(a) e^{-\int_\tau^a \gamma(h)\ dh} \ d\tau.
\end{equation}
From \eqref{eq_Iv}, if $S_v^* = 0$ then $I_v^*=0$. Otherwise, it holds that
\begin{equation}\label{eq3}
    B^* = \dfrac{\mu_v}{\beta_v} \dfrac{I_v^*}{S_v^*}.
\end{equation}
Combining \eqref{eq2}, \eqref{eq3}, and using \eqref{eq_defR0}, we obtain that
\begin{equation*}
    0\leq \dfrac{I_v^*}{N_v^*} \leq \dfrac{I_v^*}{S_v^*}\leq \dfrac{I_v^*}{N_v^*} \mathcal{R}_0,
\end{equation*}
since $S_v^*\leq N_v^*$. By assumption, $\mathcal{R}_0<1$, and therefore $I_v^*=0$. Hence, $B^*=0$ and $i_h^*(a)=0$.
\end{proof}

\begin{theorem} \label{th:endemic}
Assume that there exists $N \in \lbrace N\in (0,+\infty):g(N) = \phi \rbrace$ with $g'(N)<0$. If $\mathcal{R}_0>1$, there exists one endemic non-uniform stable steady state for System \eqref{age_vsys_2}.
\end{theorem}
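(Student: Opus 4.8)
The plan is to separate the claim into existence/uniqueness of the endemic state and its stability, reducing the infinite-dimensional stationary problem to a scalar equation in the infected-vector fraction. First I would write the stationary version of System~\eqref{age_vsys_2}. Summing the equilibrium $S_v$- and $I_v$-equations gives $\delta E^\ast=\mu_v N_v^\ast$, which together with the $E$-equation forces $g(N_v^\ast)=\phi$; the demographic coordinates are therefore pinned to $g^{-1}(\phi)$, and I select $N_v^\ast=N$, the stable root supplied by the hypothesis, with $E^\ast=\mu_v N_v^\ast/\delta$. Taking the single unknown $\theta:=I_v^\ast/N_v^\ast\in(0,1)$ (so $S_v^\ast=(1-\theta)N_v^\ast$), the stationary transport equations integrate along characteristics, with boundary data $s_h^\ast(0)=1$, $i_h^\ast(0)=0$, to
\[
s_h^\ast(a)=\exp\!\Big(-\theta\!\int_0^a\beta(\sigma)\,d\sigma\Big),\qquad
i_h^\ast(a)=\theta\int_0^a\beta(\tau)\,s_h^\ast(\tau)\,e^{-\int_\tau^a\gamma(h)\,dh}\,d\tau .
\]

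Second I would close the system through $B^\ast$. The host side gives $B^\ast=\int_0^\infty p_\infty(a)\,i_h^\ast(a)\,da=\theta\,\mathcal{K}(\theta)$, whereas the equilibrium of \eqref{eq_Iv} gives $B^\ast=\tfrac{\mu_v}{\beta_v}\tfrac{\theta}{1-\theta}$. Dividing by $\theta$, a positive endemic state corresponds exactly to a root in $(0,1)$ of
\[
H(\theta):=\mathcal{K}(\theta)-\frac{\mu_v}{\beta_v(1-\theta)},\qquad
\mathcal{K}(\theta):=\int_0^\infty\!\!\int_0^a p_\infty(a)\,\beta(\tau)\,e^{-\theta\int_0^\tau\beta(\sigma)\,d\sigma}\,e^{-\int_\tau^a\gamma(h)\,dh}\,d\tau\,da .
\]
By \eqref{eq_defR0} one has $\mathcal{K}(0)=\tfrac{\mu_v}{\beta_v}\mathcal{R}_0$, so $H(0)=\tfrac{\mu_v}{\beta_v}(\mathcal{R}_0-1)>0$ precisely because $\mathcal{R}_0>1$, while $H(\theta)\to-\infty$ as $\theta\to1^-$ since $\mathcal{K}$ stays bounded and the vector term blows up. As $\mathcal{K}$ is strictly decreasing and $\mu_v/[\beta_v(1-\theta)]$ strictly increasing, $H$ is strictly decreasing, so the intermediate value theorem yields a unique $\theta^\ast\in(0,1)$. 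Reconstructing the remaining coordinates produces a single endemic state; since $s_h^\ast$ and $i_h^\ast$ genuinely depend on $a$, it is non-uniform.

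Third, for stability I would linearize at this state through the eigenvalue system \eqref{eqE}--\eqref{eqI}. Equations \eqref{eqE}, \eqref{eqL} for $(\widetilde E,\widetilde{N_v})$ decouple from the rest, so the characteristic determinant factors into a demographic factor --- whose roots solve $\mathcal{H}_v(\psi)=1$ and lie in the left half-plane by Lemma~\ref{lem:stabNv}, since the hypothesis $g'(N)<0$ forces $\mathcal{R}_v<1$ --- and an epidemic factor obtained by setting $\widetilde E=\widetilde{N_v}=0$. For the latter I eliminate $\widetilde{S_v},\widetilde{I_v}$ from \eqref{eqJ} and its companion to get $\widetilde{I_v}=\tfrac{\beta_v S_v^\ast}{\psi+\mu_v+\beta_v B^\ast}\widetilde B$, solve the transport equations for $\widetilde{s_h},\widetilde{i_h}$ with $\widetilde{s_h}(0)=\widetilde{i_h}(0)=0$, and close via $\widetilde B=\int_0^\infty p_\infty\widetilde{i_h}$ to reach a scalar characteristic equation $\mathcal{G}(\psi)=1$.

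The hard part is showing $\mathcal{G}$ has no root with $\mathrm{Re}\,\psi\ge0$. Using the equilibrium identities $\mu_v+\beta_v B^\ast=\mu_v/(1-\theta^\ast)$ together with the defining relation for $\theta^\ast$, a direct computation gives $\mathcal{G}(0)<1$, and for $\mathrm{Re}\,\psi\ge0$ the prefactor $\beta_v S_v^\ast/|\psi+\mu_v+\beta_v B^\ast|$ is dominated by its value at $\psi=0$. The obstacle is that the renewal kernel also carries the sign-indefinite contribution generated by the perturbation $\widetilde{s_h}$ of the susceptible profile (the loss-of-susceptibility term, of order $\theta^\ast$), so a naive modulus bound $|\mathcal{G}(\psi)|\le\mathcal{G}(0)$ does not follow immediately. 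I would control this term by representing the kernel through the auxiliary function $W(a;\psi)=s_h^\ast(a)-\theta^\ast\int_0^a\beta(\tau)s_h^\ast(\tau)e^{-\int_\tau^a[\psi+\theta^\ast\beta(h)]\,dh}\,d\tau$, which obeys an explicit first-order ODE in $a$, and tracking its monotone dependence on $\mathrm{Re}\,\psi$ to establish $|\mathcal{G}(\psi)|<1$ on $\{\mathrm{Re}\,\psi\ge0\}$. Since this contradicts $\mathcal{G}(\psi)=1$, all eigenvalues lie in the open left half-plane and the endemic state is locally asymptotically stable.
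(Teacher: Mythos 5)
Your existence and uniqueness argument is correct and is essentially the paper's own proof in different clothing: the paper reduces the stationary problem to the scalar fixed-point equation $H(B)=B$ for the force of infection $B$, solves the transport equations explicitly, extends $G(B)=H(B)/B$ continuously by $G(0)=H'(0)=\mathcal{R}_0>1$, notes $G(1)=H(1)<1$, and concludes by the intermediate value theorem plus strict monotonicity of $G$. You parametrize by the infected-vector fraction $\theta=I_v^*/N_v^*$ instead of by $B$ and look for a root of $\mathcal{K}(\theta)-\mu_v/[\beta_v(1-\theta)]$, but the two closures are equivalent (your relation $B^*=\mu_v\theta/[\beta_v(1-\theta)]$ is exactly the paper's $I_v^*/N_v^*=\beta_v B^*/(\mu_v+\beta_v B^*)$ inverted), and both arguments rest on the same ingredients: explicit integration along characteristics, $\mathcal{R}_0>1$ giving positivity at the left endpoint, blow-up/smallness at the right endpoint, and strict monotonicity for uniqueness. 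The demographic pinning $g(N_v^*)=\phi$, the choice $N_v^*=N$, and the use of Remark \ref{rem_gNv} to get stability of $(E^*,N_v^*)$ also coincide with the paper.

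Where you genuinely diverge is the stability claim, and there your argument has a gap --- though an instructive one. The paper does \emph{not} linearize around the endemic state at all: its proof establishes stability only of the decoupled demographic pair $(E^*,N_v^*)$ via $g'(N)<0$ and Remark \ref{rem_gNv}, and then proves $B^*>0$; the spectral analysis of the full endemic state (epidemic block included) is simply not carried out. You attempt exactly that missing analysis, correctly observing that the linearization is block-triangular so the spectrum splits into the demographic roots (left half-plane by hypothesis) and the roots of an epidemic characteristic equation $\mathcal{G}(\psi)=1$. But your treatment of the latter is a sketch, not a proof: you acknowledge that the kernel contains the sign-indefinite term coming from $\widetilde{s_h}$ (loss of susceptibility), so the standard bound $|\mathcal{G}(\psi)|\le\mathcal{G}(0)<1$ on $\{\mathrm{Re}\,\psi\ge 0\}$ fails, and the auxiliary function $W(a;\psi)$ you introduce to repair it comes with no proof of the claimed monotonicity in $\mathrm{Re}\,\psi$, nor of the resulting strict bound. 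This is precisely the step that is hard: in age-structured SIR-type models the endemic equilibrium need not be locally stable in general (sign-indefinite renewal kernels of this type can produce roots with positive real part), so no soft modulus argument can close it; any correct proof must exploit quantitative structure of $\beta$, $\gamma$, $p_\infty$ or restrict parameters. As written, your third step asserts the conclusion rather than establishing it --- which leaves your proposal proving exactly what the paper proves (existence, uniqueness, demographic stability), with the full stability statement still open in both.
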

 
\begin{proof}
It is straightforward to verify that there exists the endemic state
\begin{align*}
    N_v^* &= N,\\
    E^* &= N \dfrac{\mu_v}{\delta}, \\
    S_v^* &= N \frac{\mu_v}{(\mu_v + \beta_v  B^*)},\\
    I_v^* &= N \frac{\beta_v  B^* }{(\mu_v + \beta_v  B^*)}.
\end{align*}
By hypothesis and Remark \ref{rem_gNv}, it holds that $(E^*,N_v^*)$ is a positive local stable fixed point. We will prove that  $B^*>0$ which implies that $I_v^*>0$.
%From the definition of $\mathcal{H}_v(\psi)$, if $\mathcal{R}_v = \mathcal{H}_v(0)>1$, there exists a positive $\psi$ such that $\mathcal{H}_v(\psi)=1$ (note that $\mathcal{H}_v$ defined in $(0,\infty)$ is positive, strictly decreasing and tends to zero). Therefore, the disease free equilibrium state is unstable. %If we order the elements in $g^{-1}(\phi)$, we then can easily observe that the odd-numbered elements are locally asymptotically stable, while the even-numbered are unstable.

A non-uniform steady state for System \eqref{age_vsys_2} is a solution of the nonlinear system 
\begin{align} \label{eq:systB*}
    %L^* &= g^{-1}\left(\phi\right),\\
    %J^* &= g^{-1}\left(\phi\right) \frac{\beta_v  B^* }{(\mu_v + \beta_v  B^*)},\\
    \dfrac{I_{v}^{*}}{N_{v}^{*}} &= \frac{\beta_v  B^*}{\mu_v + \beta_v  B^*},\nonumber\\
    \dfrac{d s_{h}^{*}(a)}{da} &= -\beta(a)s_{h}^{*}(a)\dfrac{I_{v}^{*}}{N_{v}^{*}},\nonumber\\
    \dfrac{d i_{h}^{*}(a)}{da} &= \beta(a)s_{h}^{*}(a)\dfrac{I_{v}^{*}}{N_{v}^{*}}-\gamma(a) i_{h}^{*}(a),\\
    \dfrac{d r_{h}^{*}(a)}{da} &= \gamma(a) i_{h}^{*}(a),\nonumber\\
    B^* &= \int_0^\infty p_\infty(a) i_{h}^{*}(a)\ da,\nonumber
\end{align}
for $a>0$, with initial conditions $$s_{h}^{*}(0)=1,\ i_{h}^{*}(0)=0,\ r_{h}^{*}(0)=0.$$
Consider the linear system of equations with parameter $B$ given by
\begin{align}\label{eq:systB}
    \dfrac{d s^*_{h_B(a)}}{da} &= -\beta(a)s_{h}^{*}(a)\frac{\beta_v  B}{(\mu_v + \beta_v  B)},\nonumber\\
    \dfrac{d i^*_{h_B(a)}}{da} &= \beta(a)s_{h}^{*}(a)\frac{\beta_v  B}{(\mu_v + \beta_v  B)}-\gamma(a) i_{h}^{*}(a),\\
    \dfrac{d r^*_{h_B(a)}}{da} &= \gamma(a) i_{h}^{*}(a),\nonumber
\end{align}
for $a>0$, with initial conditions $$s_{h_B}^{*}(0)=1,\ i_{h_B}^{*}(0)=0,\ r_{h_B}^{*}(0)=0.$$
Given the solution $(s_B^*(a),i_B^*(a),r_B^*(a))$ of system \eqref{eq:systB}, define $$H(B)= \int_0^\infty i_B^*(a) p_\infty(a)\ da.$$
It holds that $(s_B^*(a),i_B^*(a),r_B^*(a))$ satisfies system \eqref{eq:systB*} if and only if $B$ is a fixed point of $H$; i.e., $H(B) = B$. Moreover, if $B=0$ then $i_B^*(a)=0$ and $H(0)=0$. Thus, in order to guarantee existence of at least one non-trivial solution to \eqref{eq:systB*}, it is just necessary to prove that $H(B)$ has a positive fixed point.

Solving for $s_{h_B}^{*}(a)$ and $i_{h_B}^{*}(a)$, it follows that
\begin{align*}
    s_{h_B}^{*}(a) &= e^{\frac{- \beta_v  B}{\mu_v + \beta_v  B} \int_0^a \beta(h)\ dh},\\
    i_{h_B}^{*}(a) &= \frac{\beta_v  B}{\mu_v + \beta_v  B}\int_0^a e^{-\int_\tau^a \gamma(h)\ dh} \beta(\tau)s_{h_B}^*(\tau)\ d\tau.
\end{align*}
Define $$G(B):=\dfrac{H(B)}{B}\quad {\rm for\ } B\neq 0.$$ The function
$G:[0,1]\rightarrow\mathbb{R}$ is continuous by defining $G(0) = H'(0)$. We have that
$$G(0) = \frac{\beta_v}{\mu_v} \int_0^\infty \int_0^a e^{-\int_\tau^a \gamma(h)\ dh} \beta(\tau) p_\infty(a)\ d\tau\ da = \mathcal{R}_0 > 1$$
and $G(1)=H(1)<1$. Therefore, there exists $B^* \in (0,1)$ such that $G(B^*)=1$, i.e., $H(B^*)=B^*$ and there exists at least one endemic state for $i_h^*(a)$. Moreover, $B^*$ is unique since $G(B)$ is strictly decreasing.
%since $\frac{d G(B)}{dB}<0$, 
%and it is stable since the disease-free fixed point is unstable. 
In particular, $B^*>0$ implies that $I_v^*>0$, reaching an endemic steady state, both for vectors and humans.
\end{proof}

%$$L^*= g^{-1}\left(\phi\right)$$
%$$E^* = g^{-1}\left(\phi\right) \dfrac{ \mu_v}{\delta}$$
%$$J^* = g^{-1}\left(\phi\right) \frac{\beta_v  B^* }{(\mu_v + \beta_v  B^*)}$$
%$$V^* = g^{-1}\left(\phi\right) \frac{\mu_v }{(\mu_v + \beta_v  B^*)}$$
%$$\dfrac{J^*}{L^*} = \frac{\beta_v  B^* }{(\mu_v + \beta_v  B^*)}$$
%$$\dfrac{d s^*(a)}{da} = -\beta(a)s^*(a)\frac{\beta_v  B^* }{(\mu_v + \beta_v  B^*)}$$
%$$s^*(a) = e^{\frac{- \beta_v  B^* }{\mu_v + \beta_v  B^*} \int_0^a \beta(h)\ dh}$$
%$$ i^*(a) = \frac{\beta_v  B^* }{\mu_v + \beta_v  B^*}\int_0^a e^{-\int_\tau^a \gamma(h)\ dh} \beta(\tau)s^*(\tau)\ d\tau$$
%$$B^* = \int_0^\infty p_\infty(a) i^*(a)\ da$$
%$$r^*(a) = 1-s^*(a)-r^*(a)$$

\comment{
Now, assume that $\mathcal{R}_0<1$. Consider $\psi\in \mathbb{C}$ with $\psi=\xi+i \eta ,$ with $\psi \in \mathbb{R}^+$. Then, it holds that
\begin{equation*}
%|\psi+\mu_v|=|\xi+\eta i_h+\mu_v|&=&\sqrt{(\xi+\mu_v)^2+\eta^2},\\
\dfrac{1}{\vert \psi+\mu_v\vert }=\dfrac{1}{\sqrt{(\xi+\mu_v)^2+\eta^2}}\le\ \dfrac{1}{\xi+\mu_v}\quad {\rm and}\quad  %AQUI FALTA ALGO
|e^{-\psi(a-u)}| \leq e^{-\xi(a-u)}.
%=|e^{-(\xi+\eta i)(a-u)}|&=&|e^{-\xi(a-u)}e^{-\eta i(a-u)}|\le |e^{-\xi(a-u)}||e^{-\eta i(a-u)}|\\
%\le e^{-\xi(a-u)}&=&
%e^{-\Re \psi (a-u)}.
\end{equation*}
Therefore,
\begin{equation*}
|G(\psi)|\le G(\xi)\le G(0)={\cal R}_0<1.
\end{equation*}
Since $G(\xi)$ is a decreasing function of $\xi$, the characteristic equation $G(\psi)=1$ has no roots with positive real part ($\widetilde{E}=0,\widetilde{J}=0$).
If $\widetilde{V}\ne 0$ we have the solution $(0,0,\widetilde{V},0)$ with eigenvalue $\psi=-\mu_v$. Next, if $\widetilde{V}=0$,
\begin{equation*}
\begin{array}{rcl}
\bar{x}_a=-\psi\bar{x}-\mu_h(a)\bar{x}\Rightarrow \bar{x}(a)=\bar{x}(0)e^{-\psi a}e^{-\int_0^a\mu_h(\sigma)d\sigma}(\bar{x}(a)=0),\\
\bar{z}_a=-\psi\bar{z}-\mu_h(a)\bar{z}\Rightarrow \bar{z}(a)=\bar{z}(0)e^{-\psi(a)}e^{-\int_0^a\mu_h(\sigma)d\sigma}(\bar{z}(a)=0).
\end{array}
\end{equation*}
\begin{equation*}
\begin{array}{rcl}
0=z(t,0)=z^*(0)+z(t,0),\\
\bar{z}(t,0)=0, z(t,0)=e^{\psi t}\bar{z}(0)=0\Rightarrow \bar{z}(0)=0,\\
\rho=s(t,0)=s^*(0)+x(t,0),\\
x(t,0)=0\Rightarrow \bar{x}(0)=0.
\end{array}
\end{equation*}
}

%We then conclude that if ${\cal R}_0<1$ and $\dfrac{\delta f^\prime ( N_{v}^{*})}{\mu_v(\delta+\mu_e)}<1$, the disease-free equilibrium is then locally asymptotically stable, and unstable otherwise.

%%%%%COMENTADO%%%%%%%%% HASTA LINEA 528
\comment{
%%% VIEJO
Lets split the equations by term in order to keep track of all steps.
\noindent This is from the $V$ equation.
\begin{eqnarray*}
p(t,a)\frac{i_h(t,a)}{n(t,a)} &=& \frac{c(a)n(t,a)}{\int_0^\infty c(u)n(t,u)du}\frac{i_h(t,a)}{n(t,a)},\\
p(t,a)\frac{i_h(t,a)}{n(t,a)} &=& \frac{c(a)i(t,a)}{\int_0^\infty c(u)n(t,u)du}.
\end{eqnarray*}
\begin{equation*}
n(t,a)=n^*(a)+w(t,a).
\end{equation*}
\begin{eqnarray*}
\frac{S_v}{\int_0^\infty c(u)n(t,u)du} \int_0^\infty c(a)i_h(t,a)da\\
\frac{S_{v}^{*}+s_v}{\int_0^\infty c(u)(n^*(u)+w(t,u))du}
\end{eqnarray*}
Let $N=\int_0^\infty c(u)n(t,u)du$ then,
\begin{eqnarray*}
\frac{V}{N}\int_0^\infty c(a0i_h(t,a)da,\\
\frac{S_{v}^{*}+s_v}{N}\int_0^\infty c(a)(i_{h}^{*}(a)+y(t,a))da,
\end{eqnarray*}
\begin{equation*}
\frac{S_{v}^{*}}{N}\int_0^\infty c(a)i_{h}^{*}(a)da+\frac{V^*}{N}\int_0^\infty c(a)y(t,a)da+\frac{V}{N}\int_0^\infty c(a)i^*(a)da+\frac{V}{N}\int_0^\infty c(a)y(t,a)da.
\end{equation*}\\
Finally, we have the $V$ equation after linearization:
\begin{multline*}
\frac{ds_v}{dt} = \delta(E^*+e)-\beta_v \Big[ \frac{V^*}{N}\int_0^\infty 
c(a)i_{h}^{*}(a)da+\frac{S_{v}^{*}}{N}\int_0^\infty c(a)y(t,a)da+\frac{s_v}{N}\int_0^\infty 
c(a)i_{h}^{*}(a)da+\\
\frac{v}{N}\int_0^\infty c(a)y(t,a)da\Big] - \mu_v(V^*+v).
\end{multline*}
\noindent From which we can neglect h.o.t. terms (the last term in the brackets from the $V$ equation) and the equilibrium equation which is zero. Then the equation becomes:
\begin{equation*}
\frac{dv}{dt} = \delta e - \beta_v \Big[ \frac{V^*}{N}\int_0^\infty c(a)y(t,a)da+\frac{v}{N}\int_0^\infty c(a)i^*(a)da\Big] - \mu_v v.
\end{equation*}
\noindent Similarly,
\begin{equation*}
\frac{dj}{dt} = \beta_v \Big[ \frac{V^*}{N}\int_0^\infty c(a)y(t,a)da+\frac{v}{N}\int_0^\infty c(a)i_{h}^{*}(a)da\Big] - \mu_v i_v.  
\end{equation*}
% esta parte está comentada mop jajajaja ok ok no vi
% jajaja 
Now, lets linearize the host system. Doing it by pieces lets do the force of infection $\beta(a)s(t,a)\frac{I_v}{N_v}$.
\begin{equation*}
\beta(a)s(t,a)\frac{J}{L} = \beta(a)\big[ s_{h}^{*}(a)+x(t,a)\big]\frac{J^*+j}{L^*+l},
\end{equation*}
\noindent then,
\begin{equation*}
\frac{1}{L^*+l}=\frac{1}{L^*}\frac{1}{1+\frac{l}{L^*}} = \frac{1}{L^*}\big( 1-\frac{l}{L^*}\big) = \frac{1}{L^*}-\frac{l}{(L^*)^2},
\end{equation*}
\begin{equation*}
=\beta(a)\big[s^*(a)+x(t,a)\big](J^*+j)(\frac{1}{L^*}-\frac{l}{(L^*)^2}),
\end{equation*}
\begin{equation*}
=\beta(a)\frac{1}{L^*}\big[s^*(a)J^*+s^*(a)j+J^*x(t,a)+x(t,a)j\big]-\frac{\beta(a)l}{(L^*)^2}\big[s^*(a)J^*+s^*(a)j+J^*x(t,a)\big],
\end{equation*}
\begin{equation*}
=\beta(a)s^*(a)\frac{J^*}{L^*}+\beta(a)\frac{s^*(a)}{L^*}j+\beta(a)\frac{J^*}{L^*}x(t,a)-\beta(a)\frac{s^*(a)}{L^*}\frac{J^*}{L^*}l+h.o.t.
\end{equation*}\\
\noindent then after linearization the host system becomes:
\begin{eqnarray*}
\Big(\frac{\partial}{\partial t}+\frac{\partial}{\partial a} \Big)x(t,a) &=& -\beta(a)\frac{s_{h^*}(a)}{I_{v^*}}j-\beta(a)\frac{I_{v^*}}{N_{v^*}}x(t,a)+\beta(a)\frac{s^*(a)}{L^*}\frac{I_{v^*}}{N_{v^*}}l-\mu(a)x(t,a),\\
\Big(\frac{\partial}{\partial t}+\frac{\partial}{\partial a} \Big)y(t,a) &=& \beta(a)\frac{s^*(a)}{N_{v^*}}j+\beta(a)\frac{J^*}{L^*}x(t,a)-\beta(a)\frac{s^*(a)}{L^*}\frac{I_{v^*}}{N_{v^*}}l-(\mu_h(a)+\gamma(a))y(t,a),\\
\Big(\frac{\partial}{\partial t}+\frac{\partial}{\partial a} \Big)z(t,a) &=& \gamma(a)y(t,a)-\mu_h(a)z(t,a).
\end{eqnarray*}
We have that $v(t)=e^{\psi t} \bar{v}$ and $y(t,a)=e^{\psi t}\bar{y}(a)$, then
\begin{equation*}
\psi e^{\psi t} \bar{v}=\delta e^{\psi t} \bar{e} - \lambda\alpha_v \Big[ e^{\psi t}\frac{V^*}{N}\int_0^\infty c(a) \bar{y}(a)da+\frac{e^{\psi t}\bar{v}}{N}\int_0^\infty c(a)i^*(a)da\Big]-\mu_v e^{\psi t}\bar{v},
\end{equation*}
\noindent Then,
\begin{equation*}
\psi \bar{v} = \delta\bar{e} - \lambda\alpha_v \Big[\frac{V^*}{N}\int_0^\infty c(a) \bar{y}(a)da+\frac{\bar{v}}{N}\int_0^\infty c(a)i^*(a)da\Big]-\mu_v \bar{v},
\end{equation*}
Similarly let $j(t)=e^{\psi t}\bar{j}$ then,
\begin{equation*}
\psi \bar{j}=\lambda\alpha_v\Big[\frac{V^*}{N}\int_0^\infty c(a)\bar{y}(a)da+\frac{\bar{v}}{N}\int_0^\infty c(a)i^*(a)da\Big]-\mu_v \bar{j}.
\end{equation*}
Now for the host equations let $x(t,a)=e^{\psi t}\bar{x}(a)$, $y(t,a)=e^{\psi t}\bar{y}(a)$ and $z(t,a)=e^{\psi t}\bar{z}(a)$. Then we have,
\begin{equation}
\label{h_sys}
\begin{array}{rcl}
\psi\bar{x}(a)+\bar{x}_a(a)&=&-\beta(a)\frac{s^*(a)}{N_{v^*}}\bar{j}-\beta(a)\frac{I_{v^*}}{N_{v^*}}\bar{x}(a)+\beta(a)\frac{s^*(a)}{L^*}\frac{I_{v^*}}{N_{v^*}}\bar{l}-\mu_h(a)\bar{x}(a),\\
\psi \bar{y}(a)+\bar{y}_a(a)&=&\beta(a)\frac{s^*(a)}{I_{v^*}}\bar{j}+\beta(a)\frac{J^*}{L^*}\bar{x}(a)-\beta(a)\frac{s^*(a)}{L^*}\frac{J^*}{L^*}\bar{l}-(\mu_h(a)+\gamma(a))\bar{y}(a),\\
\psi \bar{z}(a)+\bar{z}(a)&=&\gamma(a)\bar{y}(a)-\mu_h(a)\bar{z}(a).
\end{array}
\end{equation}
The disease-free equilibrium is given by, $$i^*(a)=0,r^*(a)=0,J^*=0,L^*=V^*.$$ This leads to the following system:
\begin{equation}
\label{df_sys}
\begin{array}{rcl}
\psi \bar{v}&=&-\lambda\alpha_v\frac{V^*}{N}\int_0^\infty c(a)\bar{y}(a)-\mu_v \bar{v},\\
\psi \bar{j}&=&\lambda\alpha_v\frac{V^*}{N}\int_0^\infty c(a)\bar{y}(a)-\mu_v \bar{j},\\
\bar{x}_a&=&-\psi \bar{x}(a)-\beta(a)\frac{s^*(a)}{L^*}\bar{j}-\mu(a)\bar{x}(a),\\
\bar{y}_a&=&-\psi \bar{y}(a)+\beta(a)\frac{s^*(a)}{L^*}\bar{j}-(\mu(a)+\gamma(a))\bar{y}(a),\\
\bar{z}_a&=&-\psi \bar{z}(a)+\gamma(a)\bar{y}(a)-\mu(a)\bar{z}(a).
\end{array}
\end{equation}
Then we integrate the $\bar{y}_a$ equation this gives,
\begin{equation*}
\begin{array}{rcl}
\bar{y}_a+\psi\bar{y}+(\mu_h(a)+\gamma(a))\bar{y}&=&\beta(a)\frac{s^*(a)}{L^*}\bar{j},\\
e^{\psi a}e^{\int_0^a (\mu_h(\sigma)+\gamma(\sigma))d\sigma}\Big[\bar{y}_a+\psi \bar{y}+(\mu_h(a)+\gamma(a))\bar{y} \Big]&=&e^{\psi a}e^{\int_0^a}(\mu(\sigma)+\gamma(\sigma))d\sigma \beta(a)\frac{s^*(a)}{L^*}\bar{j},\\
\int_0^a\frac{d}{du}\Big[e^{\psi u}e^{\int_0^u(\mu_h(\sigma)+\gamma(\sigma))d\sigma}\bar{y}\Big]du&=&\int_0^a e^{\psi u}e^{\int_0^u(\mu_h(\sigma)+\gamma(\sigma))d\sigma}\beta(u)\frac{s^*(u)}{L^*}\bar{j}du,\\
e^{\psi a}e^{\int_0^a(\mu_h(\sigma)+\gamma(\sigma))d\sigma}\bar{y}(a)-\bar{y}(0)&=&\int_0^a e^{\psi u}e^{\int_0^u(\mu_h(\sigma)+\gamma(\sigma))d\sigma}\beta(u)\frac{s^*(u)}{L^*}\bar{j}du\\
\end{array}
\end{equation*}
\begin{equation*}
\bar{y}(a)=e^{-\psi a}e^{-\int_0^a(\mu_h(\sigma)+\gamma(\sigma))d\sigma}\bar{y}(0)+\bar{j}\int_0^ae^{-\psi(a-u)}e^{-\int_u^a(\mu_h(\sigma)+\gamma(\sigma))d\sigma}\beta(u)\frac{s^*(u)}{L^*}du.
\end{equation*}
Then we look at the initial conditions for $i(t,a)$,
\begin{equation*}
\begin{array}{rcl}
i(t,a)=i^*(a)+y(t,a),\\
i(t,0)=i^*(0)+y(t,0)=0,\\
y(t,0)=0,\\
e^{\psi t}\bar{y}=0,\\
\bar{y}(0)=0.
\end{array}
\end{equation*}
\noindent Then,
\begin{equation*}
\bar{y}(a)=\bar{j}\int_0^ae^{-\psi(a-u)}e^{-\int_u^a(\mu_h(\sigma)+\gamma(\sigma))d\sigma}\beta(u)\frac{s^*(u)}{L^*}du.
\end{equation*}
\noindent If we substitute this into the $\bar{j}$ equation this gives,
\begin{equation*}
\psi \bar{j}=\lambda\alpha_v\frac{V^*}{N}\bar{j}\int_0^\infty c(a)\int_0^ae^{-\psi(a-u)}e^{-\int_u^a(\mu_h(\sigma)+\gamma(\sigma))d\sigma}\beta(u)\frac{s^*(u)}{L^*}dud\sigma-\mu_v \bar{j}.
\end{equation*}
Having $\bar{j}\ne 0$ corresponds to eigenvalues that solve the characteristic equation,
\begin{equation*}
\psi+\mu_v=\lambda\alpha_v\frac{V^*}{N}\int_0^\infty c(a)\int_0^ae^{-\psi(a-u)}e^{-\int_u^a(\mu_h(\sigma)+\gamma(\sigma))d\sigma}\beta(u)\frac{s^*(a)}{L^*}duda.
\end{equation*}
%\begin{figure}[htb]
%	\begin{center}
%		\includegraphics[width=2in,height=2in]{ex}
%	\end{center}
%\end{figure}
\noindent Dividing by $\psi+\mu_v$ and having $L^*=V^*$, $s(t,0)=\rho$ gives,
\begin{equation*}
1=\frac{1}{\psi+\mu_v}\frac{\rho\lambda\alpha_v}{N}\int_0^\infty c(a) e^{-\int_0^a\mu_h(\sigma)d\sigma}\int_0^a-\psi(a-u)e^{-\int_u^a\gamma(\sigma)d\sigma}\beta(u)duda.
\end{equation*}
\noindent Let,
\begin{equation*}
G(\psi)=\frac{1}{\psi+\mu_v}\frac{\rho\lambda\alpha_v}{N}\int_0^\infty c(a) e^{-\int_0^a\mu_h(\sigma)d\sigma}\int_0^a-\psi(a-u)e^{-\int_u^a\gamma(\sigma)d\sigma}\beta(u)duda=1.
\end{equation*}

}
%%%%ARRIBA COMENTADO%%%%%%

%$$G(\psi)=\frac{1}{\psi+\mu_v}\frac{\rho\lambda\alpha_v}{N}\int_0^\infty c(a) e^{-\int_0^a\mu(\sigma)d\sigma}\int_0^a-\psi(a-u)e^{-\int_u^a\gamma(\sigma)d\sigma}\beta(u)duda=1.$$

\section{Numerical implementation} \label{sec:num}

For simplicity, we discretize the system of partial differential equations \eqref{age_vsys_2} with a first-order upwind finite difference scheme. We approximate the solution on the physical domain of interest given by the rectangle $\lbrace(t,a):t\in [0,T], a\in [0,A]\rbrace$. We divide the intervals $[0,T]$ and $[0,A]$ in $N_T$ and $N_A$ subintervals, respectively, and consider the grid given by the nodes
$$(t_j,a_k) = \left( j\Delta t, k\Delta a\right),$$
for $j\in\lbrace 0,1,\ldots,N_T\rbrace$, $k\in\lbrace 0,1,\ldots,N_A\rbrace$, where 
$$\Delta t := \frac{T}{N_T},\quad \Delta a := \frac{A}{N_A}$$ are the corresponding step sizes. For any function $x$ and a grid point $(t_j,a_k)$, we denote the approximation of $x(t_j,a_k)$ by $x_k^j$. If the function depends only on age or time, it is denoted simply by $x_k$ or $x^j$. We approximate the force of infection $B(t_j)$ by the composite trapezium rule
$$B^j = \Delta a \left( \sum_{k=1}^{N_A-1} p_j^k\ (i_h)_j^k\ da + \frac{1}{2}p_j^{N_A} (i_h)_j^{N_A} \right).$$
We then have the explicit scheme given by the equations
\begin{align*}
    (N_v)^{j} &= (S_v)^{j} + (I_v)^{j},\\ 
    E^{j+1} &=  E^{j}+\Delta t\left[f\left((N_v)^{j}\right)-(\delta + \mu_e)E^{j}\right],\\
    (S_v)^{j+1} &= E^{j}+\Delta t\left[\delta E^{j} - \beta_v B^j (S_v)^j - \mu_v (S_v)^j\right],\\
    (I_v)^{j+1} &= (I_v)^{j}+\Delta t\left[ \beta_v B^j (S_v)^j - \mu_v (I_v)^j\right],\\
    (s_h)^{j+1}_k &= (s_h)^{j}_k + \Delta t \left[- \beta_k (s_h)^j_k \dfrac{(I_v)^j}{(N_v)^j}  - \frac{(s_h)^j_k - (s_h)^{j}_{k-1}}{\Delta a}\right],\\
    (i_h)^{j+1}_k &= (i_h)^{j}_k +  \Delta t\left[ \beta_k (s_h)^j_k \dfrac{(I_v)^j}{(N_v)^j} - \gamma_k (i_h)^j_k -  \frac{(i_h)^j_k - (i_h)^{j}_{k-1}}{\Delta a} \right],\\
    (r_h)^{j+1}_k &= (r_h)^{j}_k+ \Delta t\left[ \gamma_k (i_h)^j_k -  \frac{(r_h)^j_k - (r_h)^{j}_{k-1}}{\Delta a} \right],\\
\end{align*}
for $1\leq k\leq N_A$ and $0\leq j\leq N_T-1$. Thus, given the initial conditions $E^0$, $(S_v)^0 $, $(I_v)^0$, $s_0^j$, $s_k^0$, $i_0^j$, $i_k^0$, $r_0^j$, $r_k^0$, we can compute the values of the unknowns on the grid points in successive time steps. 

We present different scenarios where we confirm the results from Theorems \ref{th:R0} and \ref{th:endemic}. For the age dependent parameters, we show the distributions used in Figure \ref{fig:ex1_gamma_c}. We consider three different functions $f(N_v) = N_v g(N_v)$ in the following sections: a logistic-type function, a case with multiple vector steady-states, and a seasonal example.

\begin{figure}[t!]
\centering
\subfloat{\includegraphics[width=0.3\textwidth]{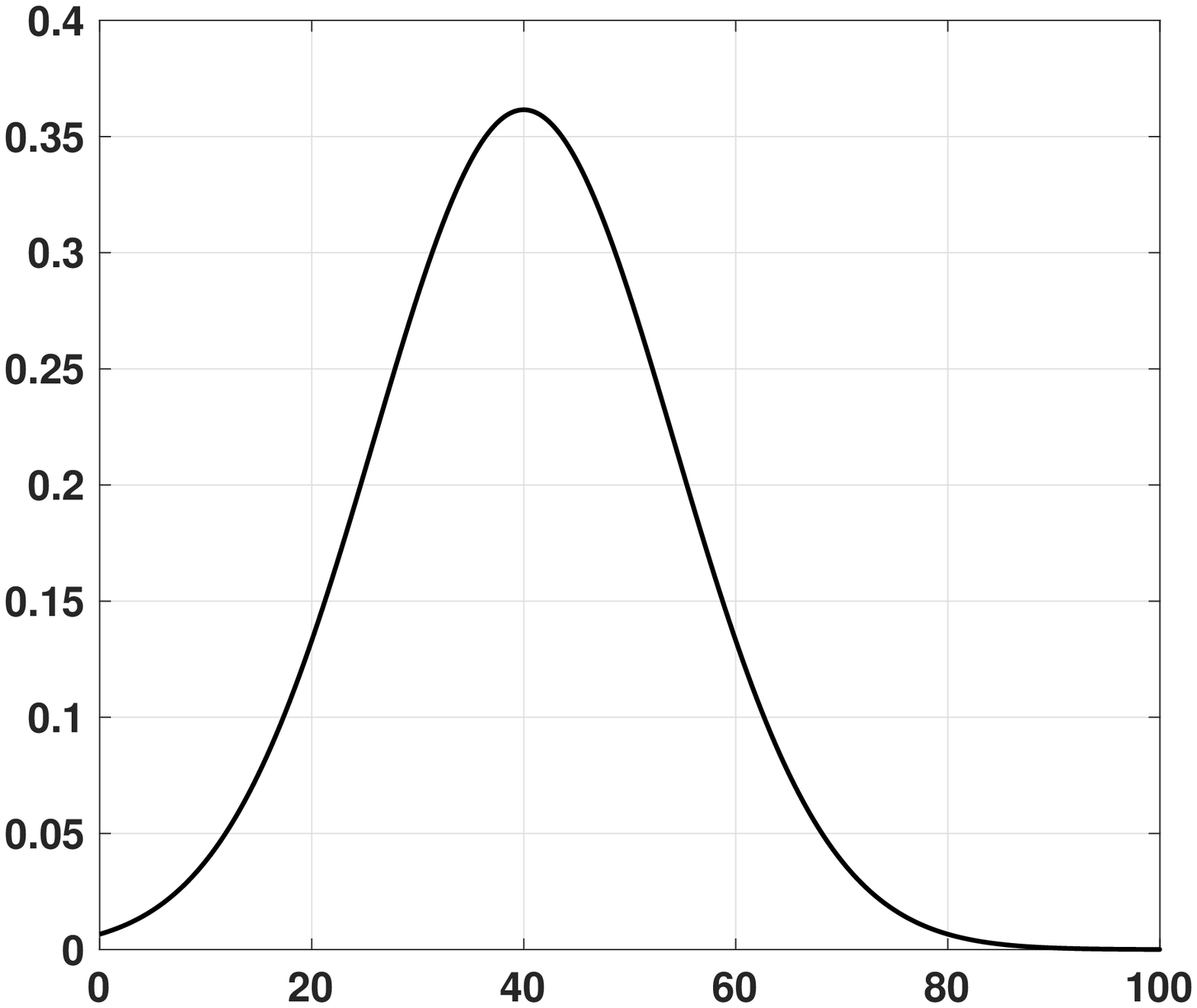}}\hfill
\subfloat{\includegraphics[width=0.3\textwidth]{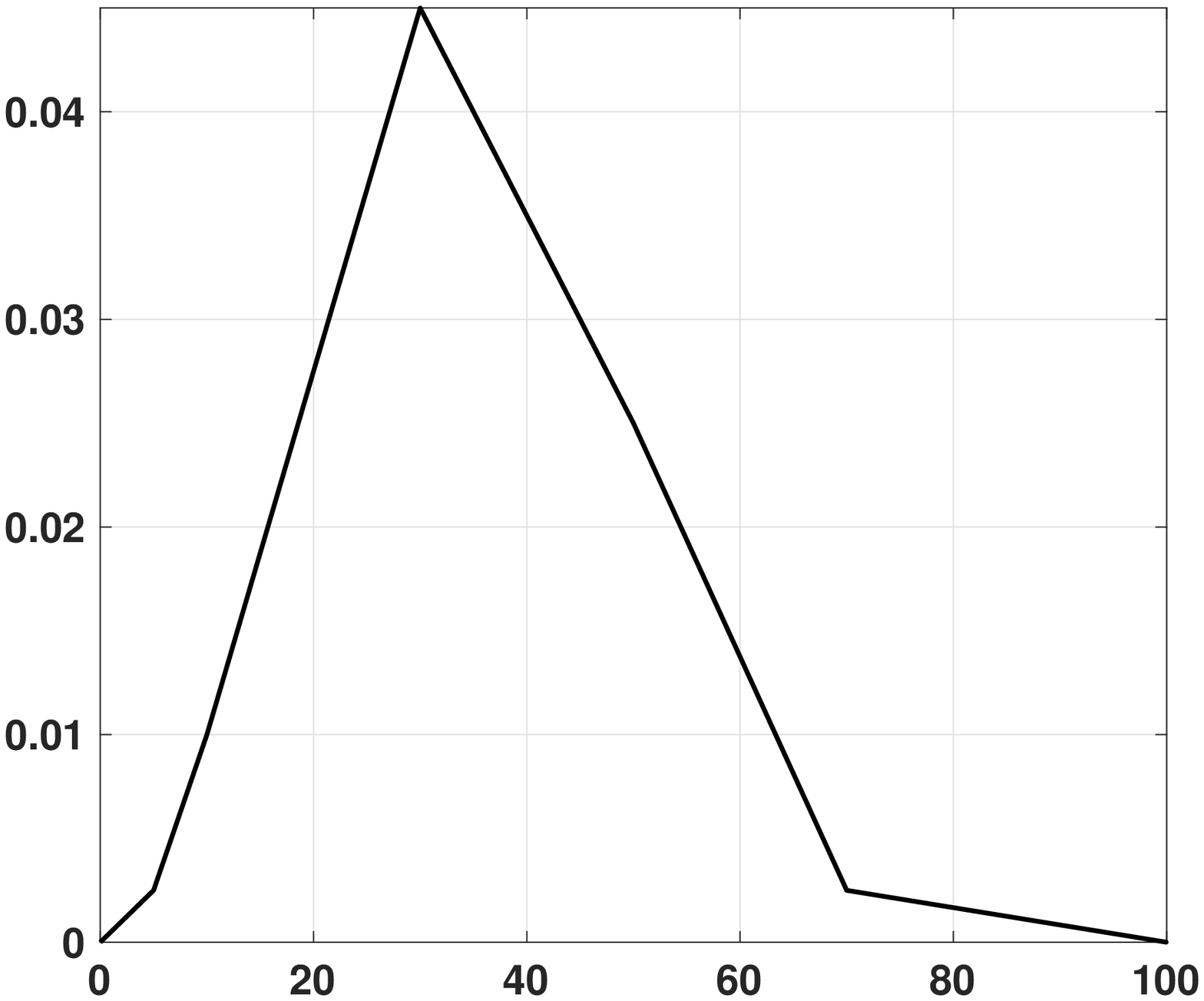}}\hfill
\subfloat{\includegraphics[width=0.3\textwidth]{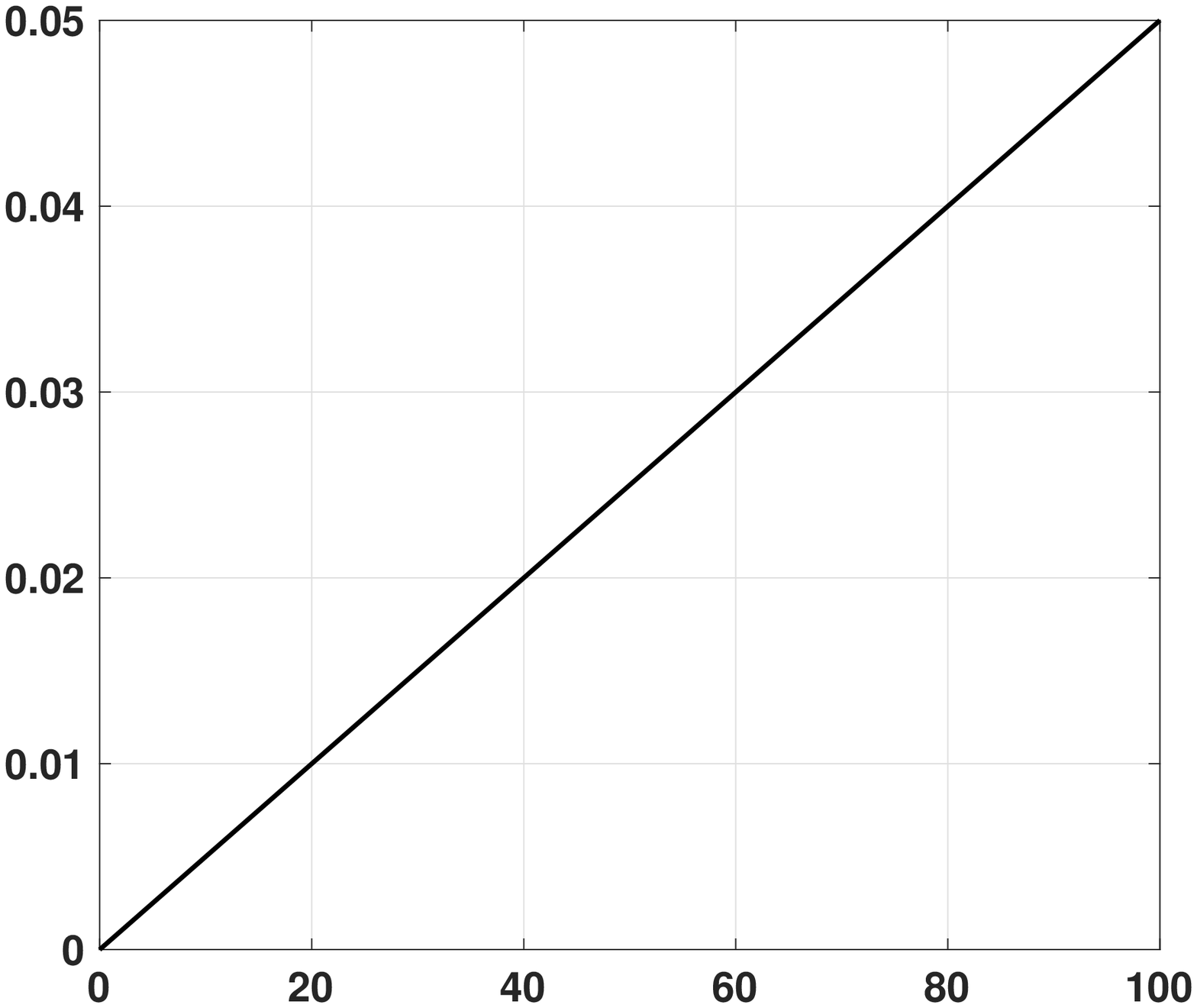}}
\caption{(left) Transmission rate $\beta(a)$, (middle) recovery rate $\gamma(a)$ and (right) mortality rate $\mu(a)$, as functions of age.
%, modeled via a normal distribution.
\label{fig:ex1_gamma_c}}
\end{figure}

\subsection{Logistic growth} \label{sec:numexp}
We first consider 
    \begin{equation*}
    g(N_v) = r\left (1-\dfrac{N_v}{N_{\max}}\right)    
    \end{equation*}
    for given constants $r$ (mosquito growth rate) and $N_{\max}$ (maximum number of mosquitoes that the system can hold); see Figure \ref{fig_g}. Recall that for a positive steady-state on vectors we require $\phi=g(N_v)$. In this case:
    \begin{enumerate}
        \item If $\phi>r$, there exists only the trivial state $N_v^*=0$, which is stable since $\mathcal{R}_v = \dfrac{g(0)}{\phi} <1$; see Figure \ref{fig_g2}.
        \item For $\phi < r$, besides the unstable state $N_v^*=0$, we have the non-zero state $$N_v^*= N_{\max}\left(1-\dfrac{\phi}{r}\right), $$
    which is stable since $g'(N_v^*)<0$; see Remark \ref{rem_gNv} and Figure \ref{fig_g1}. %k = 3e5; %r = .2;
    \end{enumerate}

\begin{figure}[t!]
\centering
\subfloat[$N_v^*=0$ is the only (stable) fixed point.\label{fig_g2}]{\includegraphics[width=0.49\textwidth]{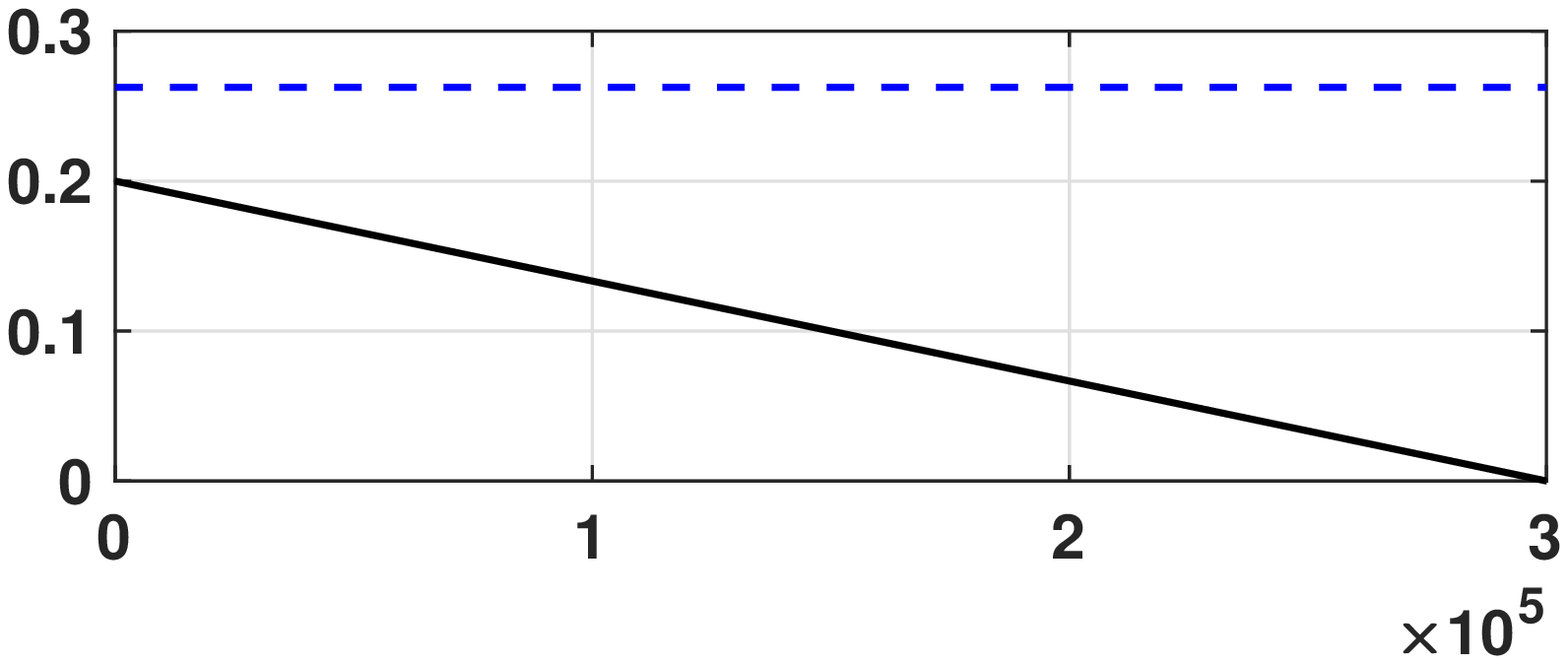}}\hfill
\subfloat[Existence of a positive local stable point $N_v^*$.  \label{fig_g1}]{\includegraphics[width=0.49\textwidth]{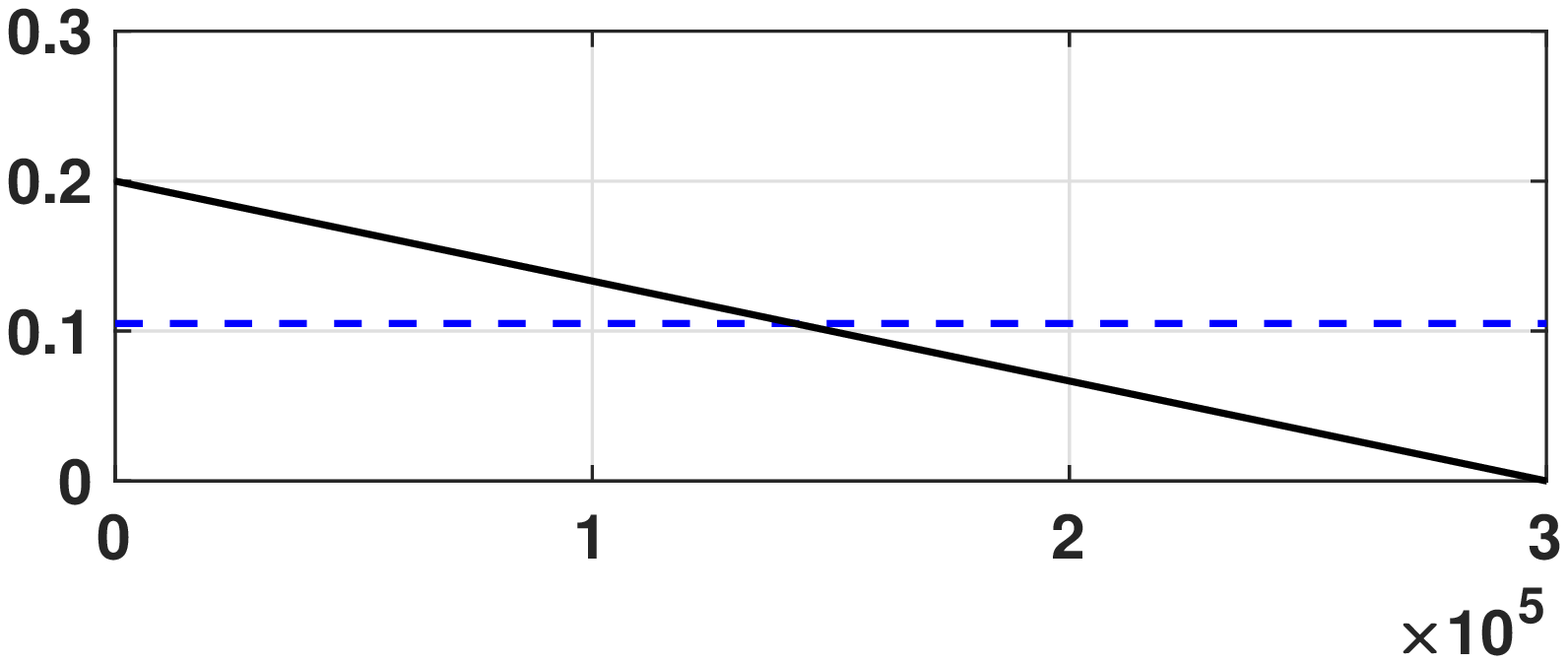}}\hfill
\caption{$g(N_v)$ (solid line) as a function of the number of vectors $N_v$, considered in Section \ref{sec:numexp}. The dashed line corresponds to the value $\phi$. Solutions to the equation $g(N)=\phi$ correspond to non-trivial steady states for $N_v$.
\label{fig_g}}
\end{figure}

\begin{example} \label{ex1} {\rm %ex1.m
We first confirm that the condition $\mathcal{R}_0>1$ could lead to an endemic state, as long as there is a stable positive steady-state for vectors. We consider a set of parameters for which $\phi \approx 3.39$ and $\mathcal{R}_0 \approx 1.60$. First, if $r=0.20$, the only stable fixed point is $(E^*,S_v^*,I_v^*)=(0,0,0)$ for which $i^*(a)=0$; see Figures \ref{fig:ex1a}, \ref{fig:ex1b}. Second, if $r=5$, we have the stable fixed point $(E^*,S_v^*,I_v^*) \approx (9633,7505,88824)$. In this case, we have an endemic state as shown in Figures \ref{fig:ex1c}, \ref{fig:ex1d}, according to Theorem \ref{th:endemic}.

\begin{figure}[!ht]
\centering
\subfloat[ $i_h(t,a)$ \label{fig:ex1a}]{\includegraphics[width=0.49\textwidth]{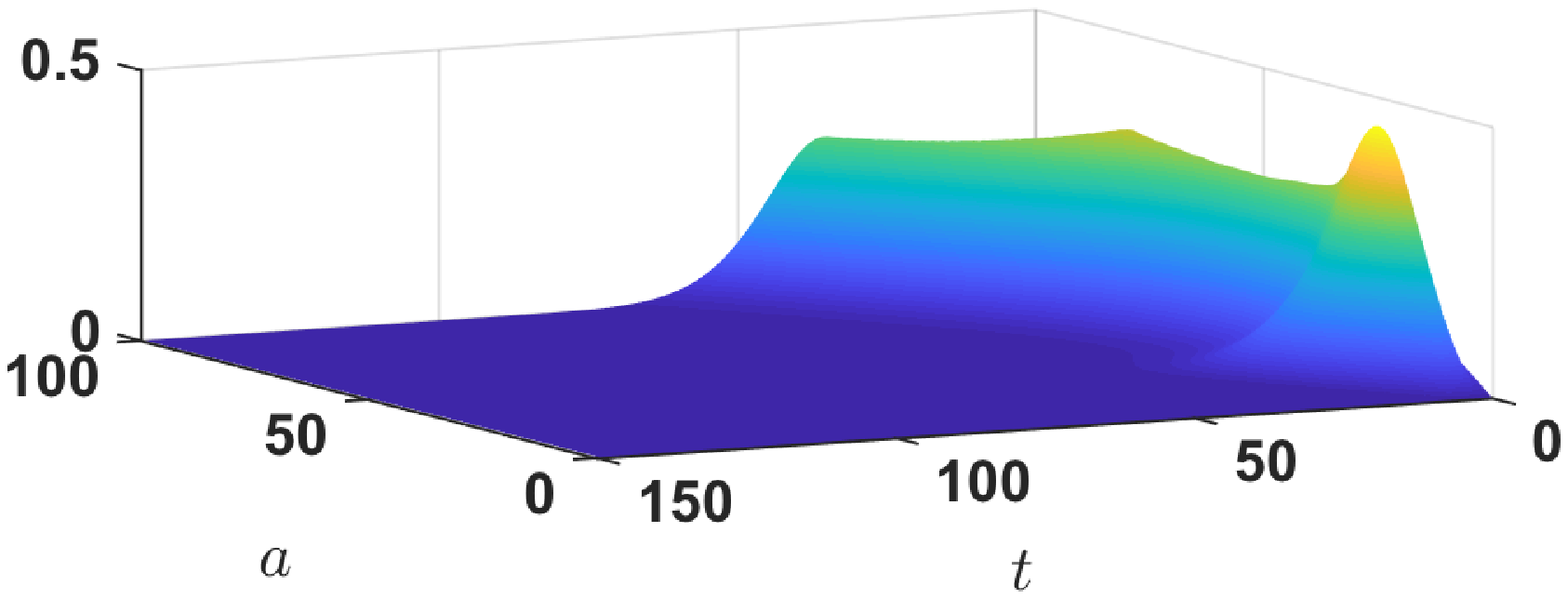}}\hfill
\subfloat[Vectors\label{fig:ex1b}]{\includegraphics[width=0.49\textwidth]{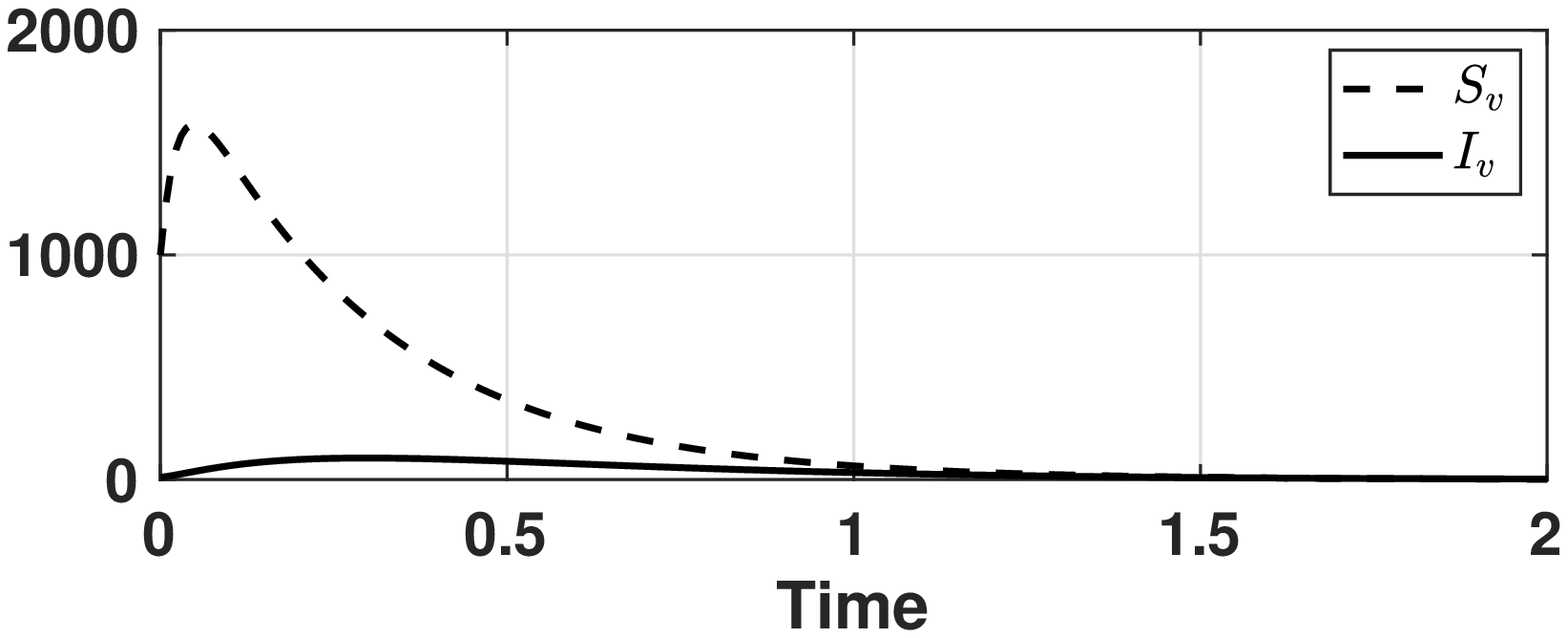}}\hfill
\subfloat[ $i_h(t,a)$ \label{fig:ex1c}]{\includegraphics[width=0.49\textwidth]{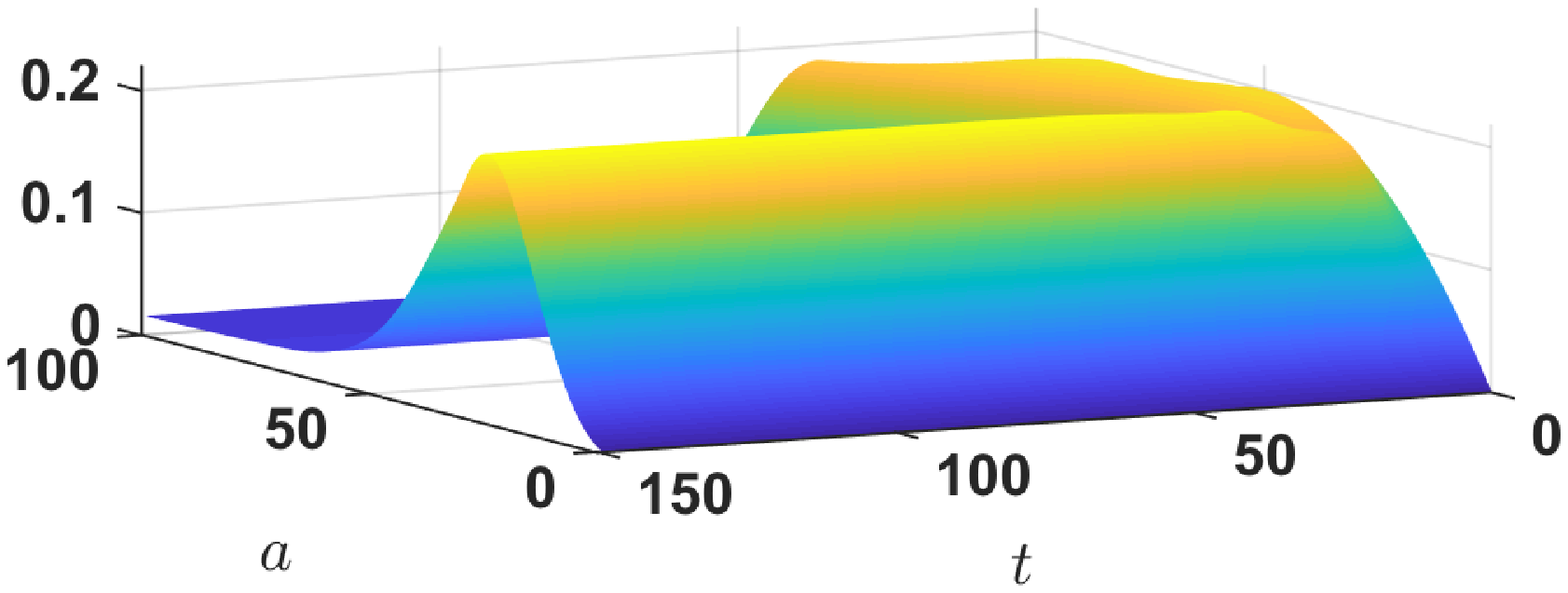}}\hfill
\subfloat[Vectors\label{fig:ex1d}]{\includegraphics[width=0.49\textwidth]{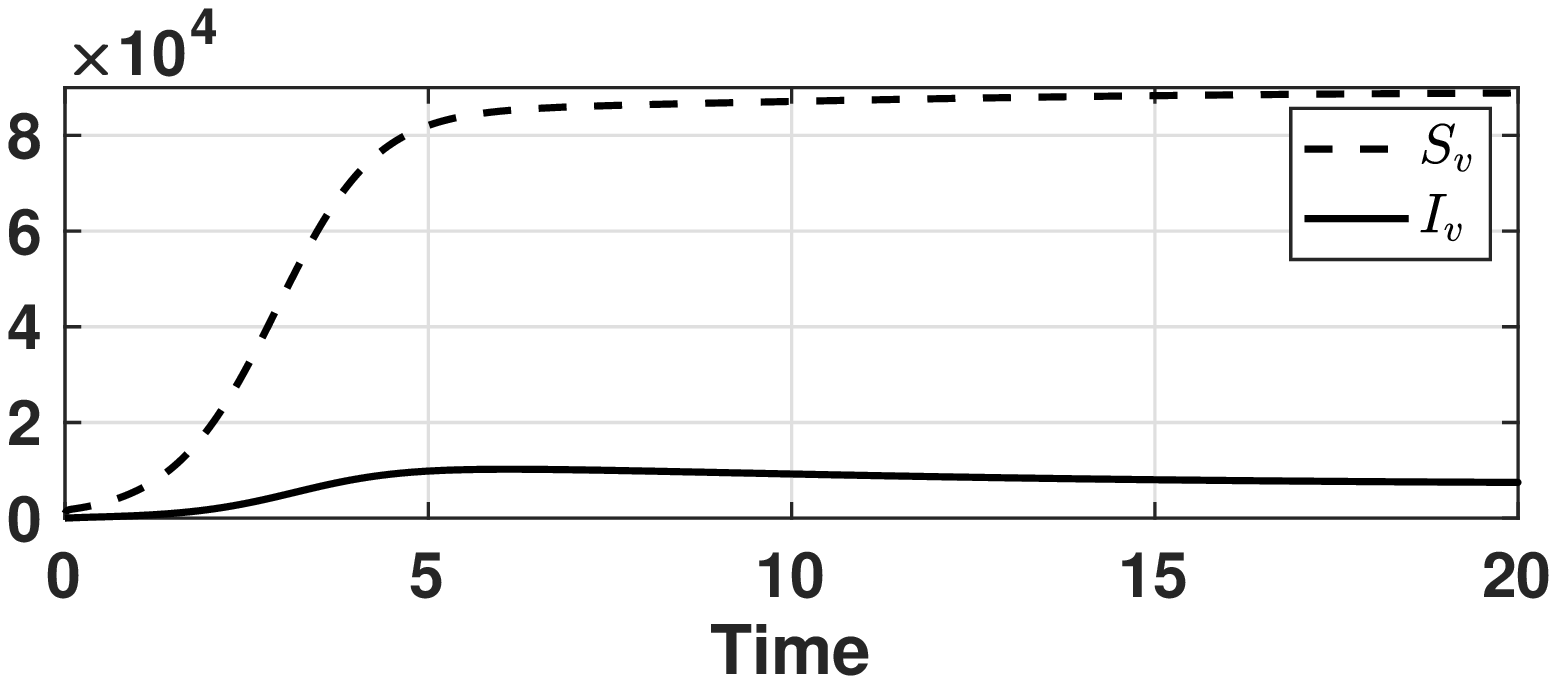}}\hfill
\caption{Solutions for the (left) infected class and (right) vectors when $\mathcal{R}_0>1$ for (top) $\phi>r=0.20$ and (bottom) $\phi < r=5$; see Example \ref{ex1}. \label{fig:ex1}}
\end{figure}
}
\end{example}

\begin{example} \label{ex2}
{\rm
In this example we confirm that the condition $\mathcal{R}_0<1$ is sufficient to guarantee a disease-free steady state. We take $r=5$ and reduce $\beta$ such that $\mathcal{R}_0<1$. The infected class reaches a disease-free state as shown in Figure \ref{fig:ex2a}, according to Theorem \ref{th:R0}. Even though there exists a positive state for vectors $(E^*,N_v^*)$ as shown in Figure \ref{fig:ex2b}, we observe that $I_v^*=0$.%; see results in Figure \ref{fig:ex2}.

\begin{figure}[!htb]
\centering
\subfloat[$i(t,a)$ \label{fig:ex2a}]{\includegraphics[width=0.44\textwidth]{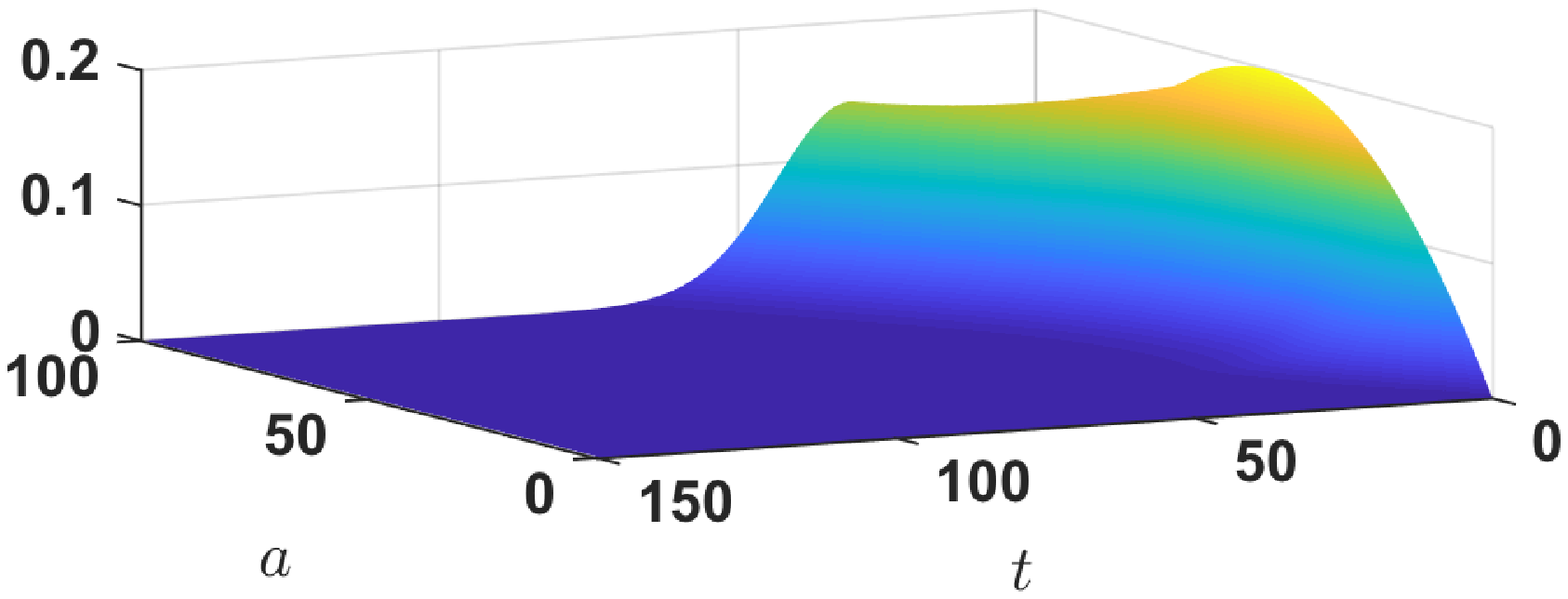}}\hfill
\subfloat[Vectors \label{fig:ex2b}]{\includegraphics[width=0.44\textwidth]{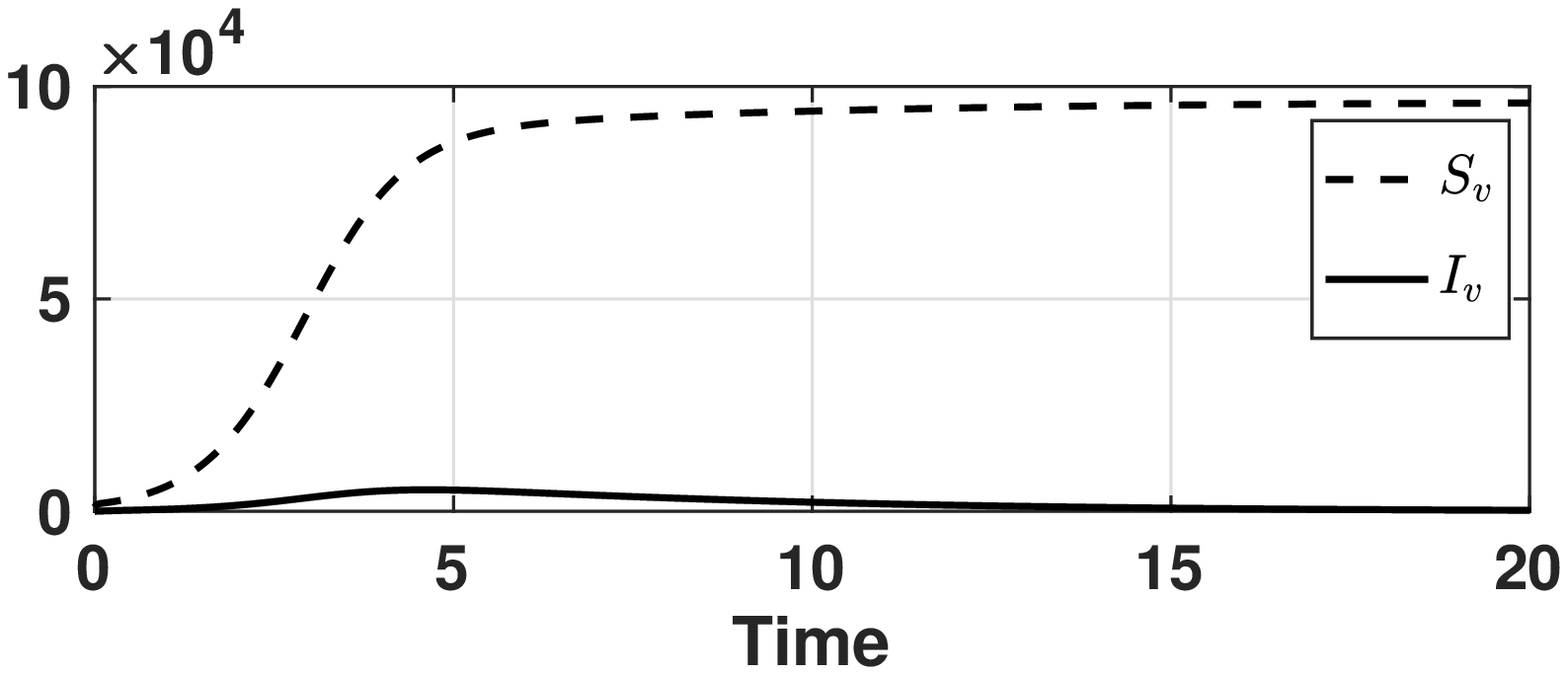}}\hfill
\caption{When $\mathcal{R}_0<1$, the disease-free state is stable, even though there is a positive steady state for vectors ($\mathcal{R}_v>1$); see Example \ref{ex2}. \label{fig:ex2}}
\end{figure} 
}
\end{example}

\begin{example} \label{ex3} {\rm
We now consider the case $\mathcal{R}_0>1$ with initial conditions $E_0=0$, $S_{v_0} = 10$, $I_{v_0} = 1$, $i_h(0,a) = r_h(0,a) =0$ (no infected or immune humans at time $t=0$); see results for the infected class in Figure \ref{fig:ex3}. It is clear that $\mathcal{R}_0>1$ guarantees an endemic state as long as vectors can survive.

\begin{figure}[!h]
\centering
\subfloat[$i_h(t,a)$ \label{fig:ex3a}]{\includegraphics[width=0.49\textwidth]{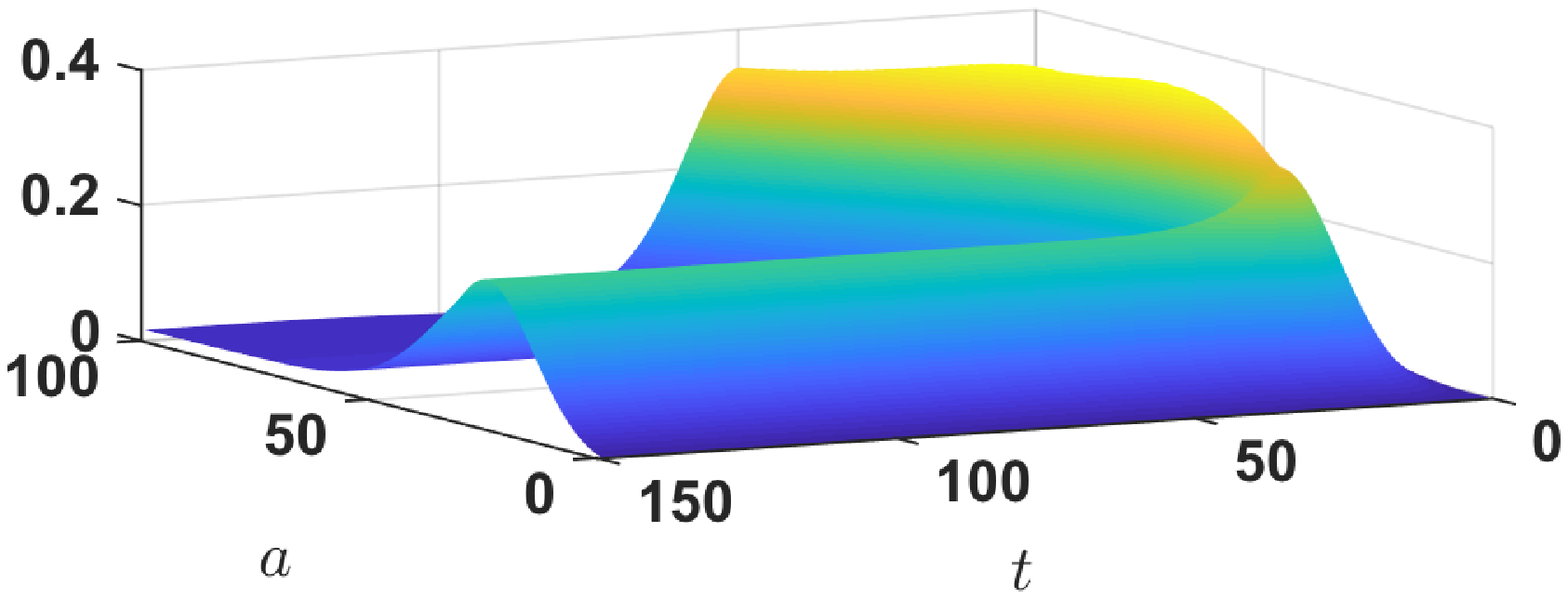}}\hfill
\subfloat[Vectors\label{fig:ex3b}]{\includegraphics[width=0.49\textwidth]{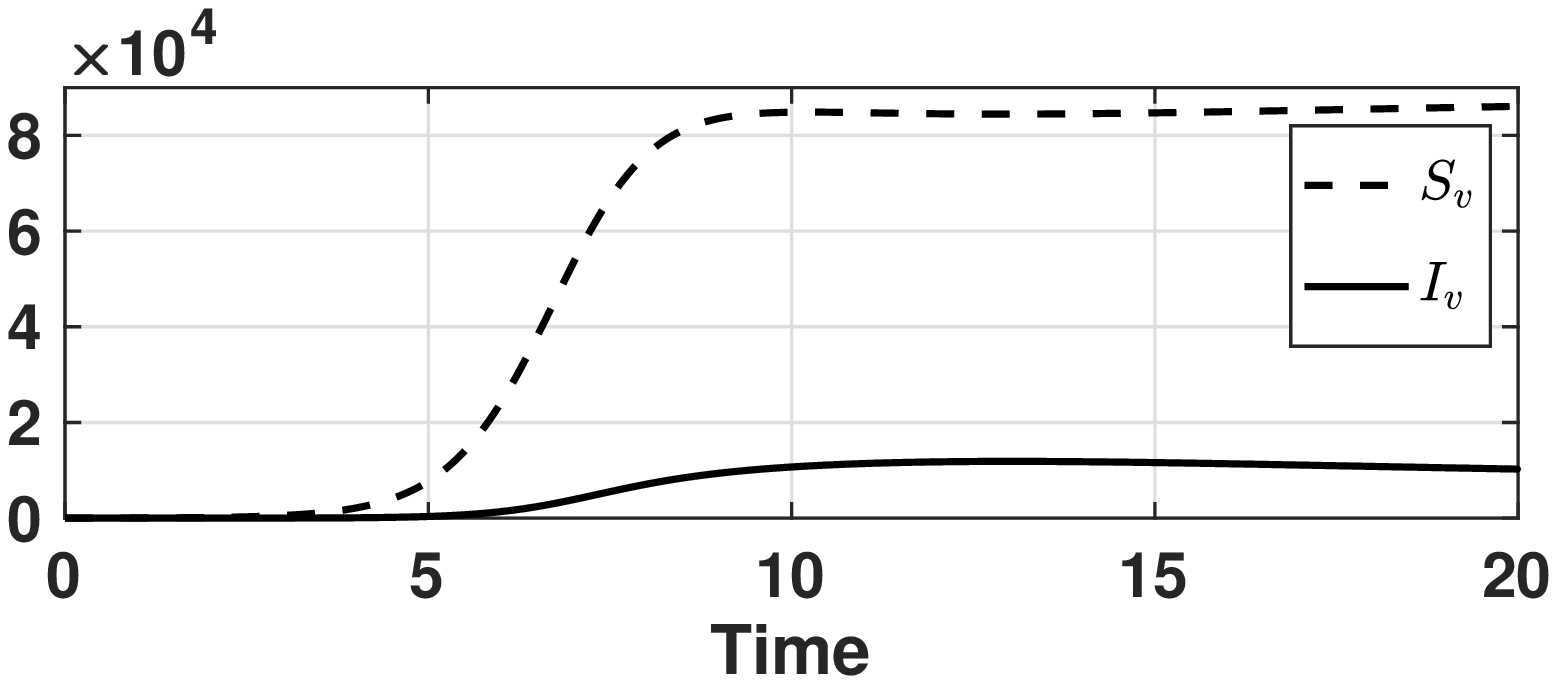}}\hfill
\caption{When $\mathcal{R}_0>1$, if there exists a positive state for vectors ($\mathcal{R}_v>1$) we can observe an endemic state on humans; see Example \ref{ex3}. \label{fig:ex3}}
\end{figure} 

}
\end{example}

\subsection{Multiple vector demographic states}
In a second set of experiments we use 
\begin{equation} \label{eq:g2}
g(N_v) = r e^{-N_v/c_1} (\sin(c_2 N_v)+1),
\end{equation}
for given constants $r$, $c_1$, $c_2$. Here, $r$ is the vector per-capita fertility rate, $c_1$ is a form of vector control and $c_2$ represents the variations in vector densities; for a particular choice of parameters see Figure \ref{fig:stabNv*}. Equation \eqref{eq:g2} represents the different growth rates of vectors for the wet and dry seasons. In this way, we simulate variations based on vector control efforts, obtaining multiple vector demographic steady states for $(E^*,N_v^*)$. For the particular choice of parameters we have used, we obtain eight positive fixed points. Numerically we confirm that four of them are locally stable.%; see Figure \ref{fig:stabPts}.

%\begin{figure}[h]
%\centering
%\subfloat{\includegraphics[width=0.45\textwidth]{fig_endStates1_.eps}}\hfill
%\subfloat{\includegraphics[width=0.45\textwidth]{fig_endStates2_.eps}}\hfill
%\caption{$N_v^*$ (left) and $E^*$ (right) as a function of initial conditions $N_{v0}$, $E_0$, where $g(N_v)$ is given by \eqref{eq:g2}. \label{fig:stabPts}}
%\end{figure}

%We verify the behavior of the model for different scenarios:

\begin{example} {\rm \label{ex4}
We first confirm the result proved in Lemma \ref{lem:Rv0}. We observe that if $r<\phi$, the infection-free steady state is stable and unstable otherwise; results for $(E_0,S_{v0},I_{v0})=(10,10,10)$ are shown in Figure \ref{fig:ex4a}. We then obtain different solutions for different initial conditions for the vector classes for which different positive steady-states are reached; see Figure \ref{fig:g2_difI}. Despite having multiple vector densities the outbreaks are similar in severity. This implies that even when vector density is low an outbreak is possible.

% i*=0 locallly stable (small initial boundary cndts lead to no disease, even R0>1
% 
%1 - $(E^*,N_v^*) \approx (7141,71413)$ %ex4a, init cond (9766, 73000, 10)
%2 - $(0,0)$ Initial conditions (E,S,V) = (10,10,1)

\begin{figure}[ht!]
\centering
\subfloat[Vector growth rate ($r=3$) is less than $\phi$. \label{fig:g2_difIaa}]{\includegraphics[width=0.45\textwidth]{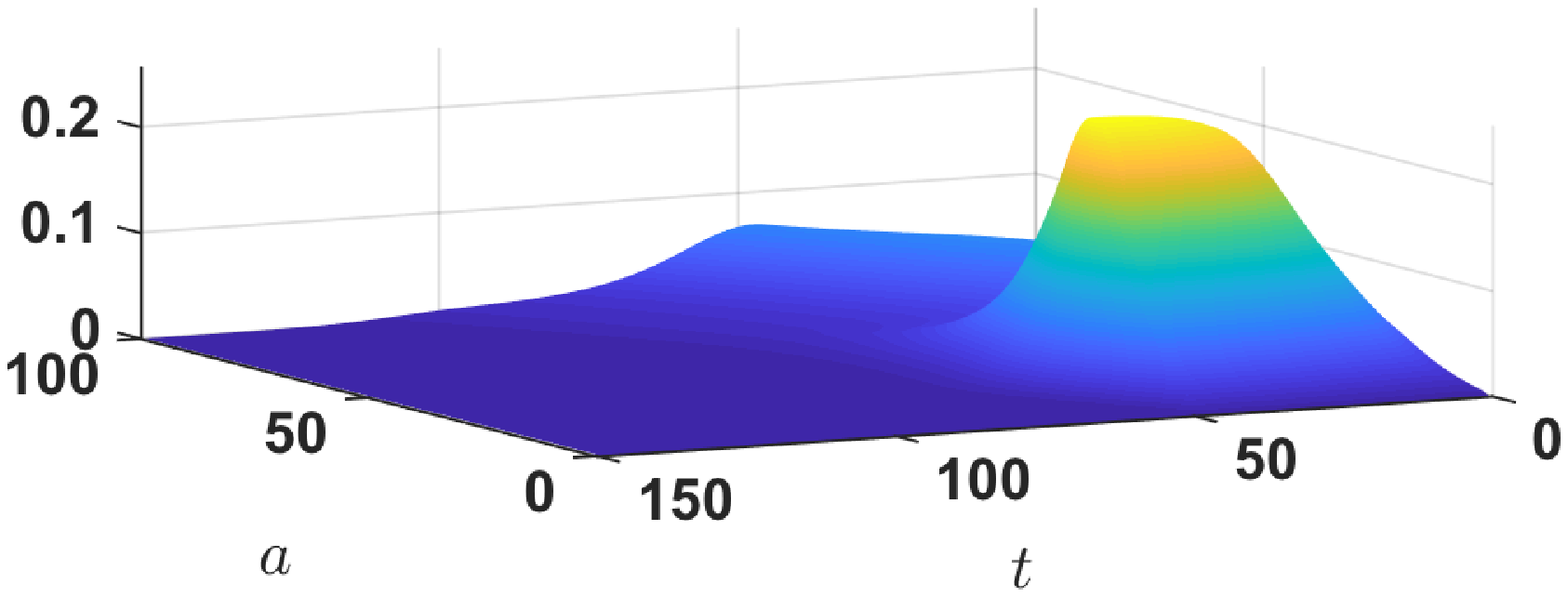}}\hfill
\subfloat[Vector growth rate ($r=4$) is greater than $\phi$. \label{fig:g2_difIb}]{\includegraphics[width=0.45\textwidth]{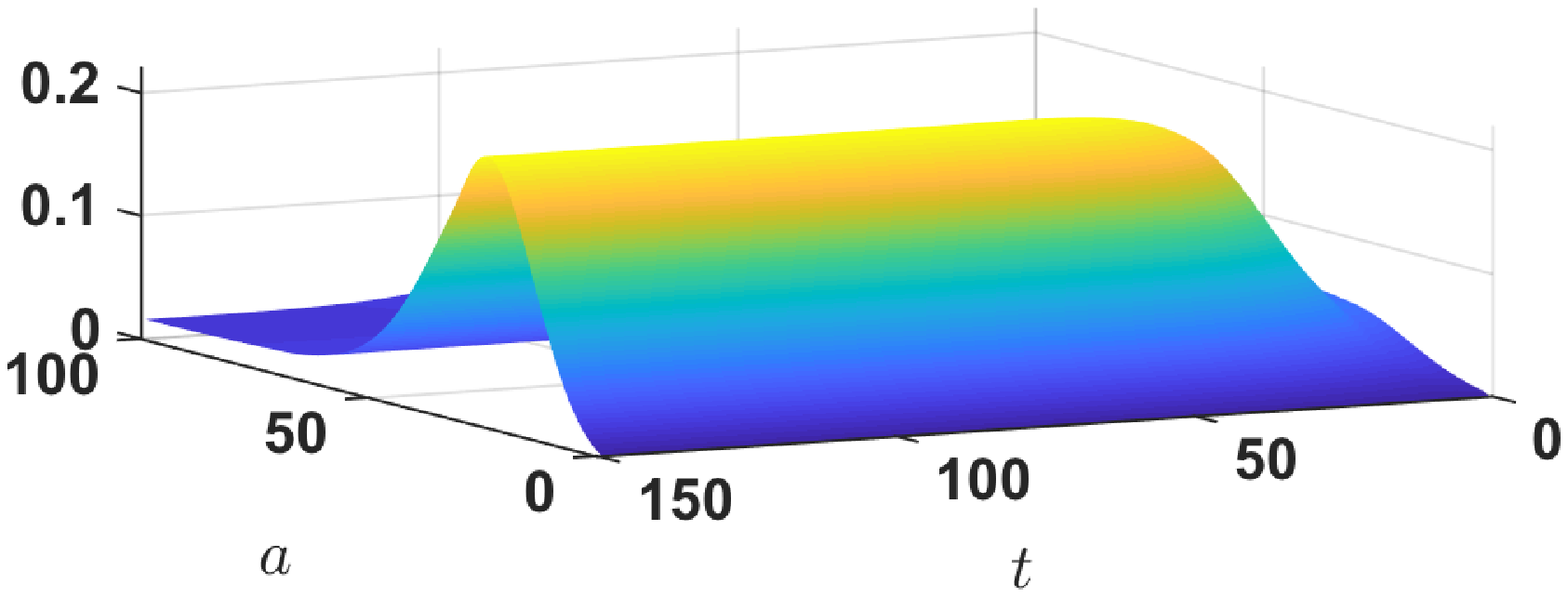}}
\caption{Different steady state distributions of infected individuals $i_h^*(t,a)$ with initial conditions $(E_0,N_{v0}) = (10,20)$. If $r<\phi$, the infection-free steady state is stable and unstable otherwise; see Example \ref{ex4}. \label{fig:ex4a}}
\end{figure}

\begin{figure}[h!]
\centering
\subfloat[$(E_0,N_{v0}) = (1525,27460)$ \label{fig:g2_difIc}]{\includegraphics[width=0.49\textwidth]{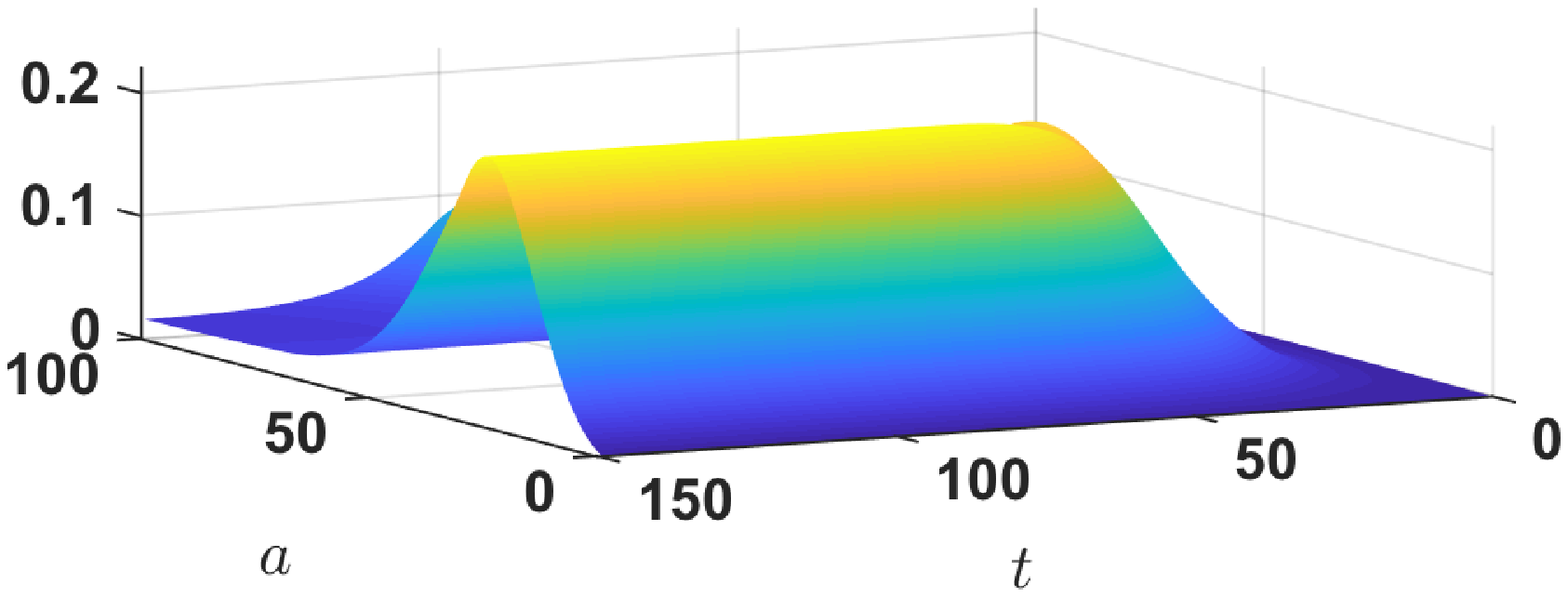}}\hfill
\subfloat[Vectors \label{fig:zeroUns}]{\includegraphics[width=0.49\textwidth]{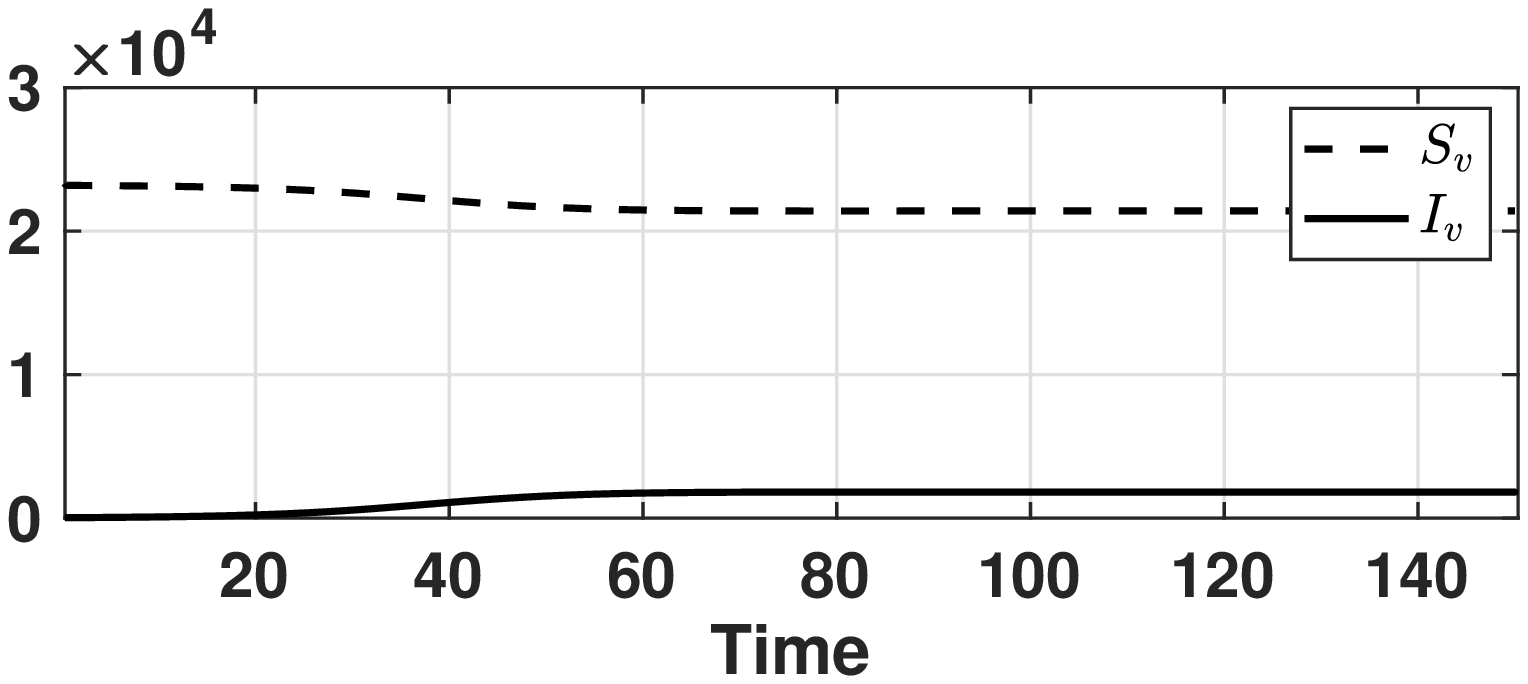}}\hfill
\subfloat[$(E_0,N_{v0}) = (200000,100000)$ \label{fig:g2_difIa}]{\includegraphics[width=0.49\textwidth]{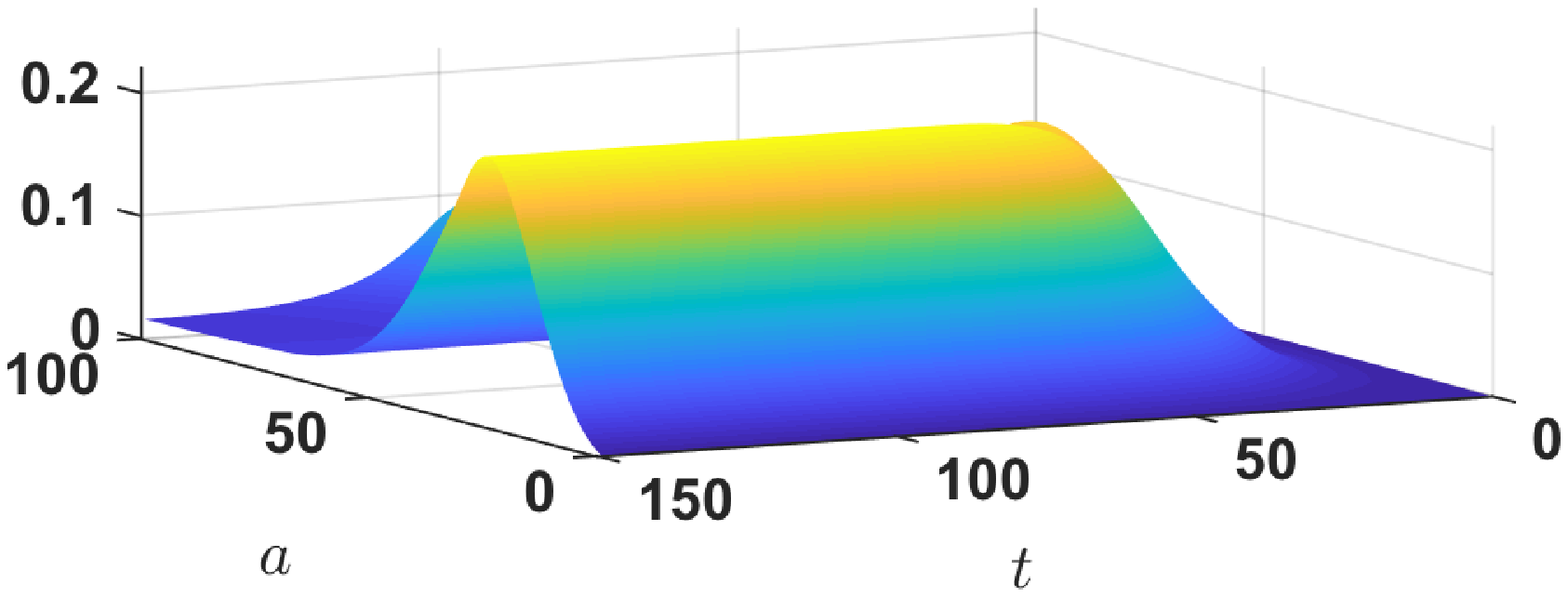}}\hfill
\subfloat[ \label{fig:g2_difId}]{\includegraphics[width=0.49\textwidth]{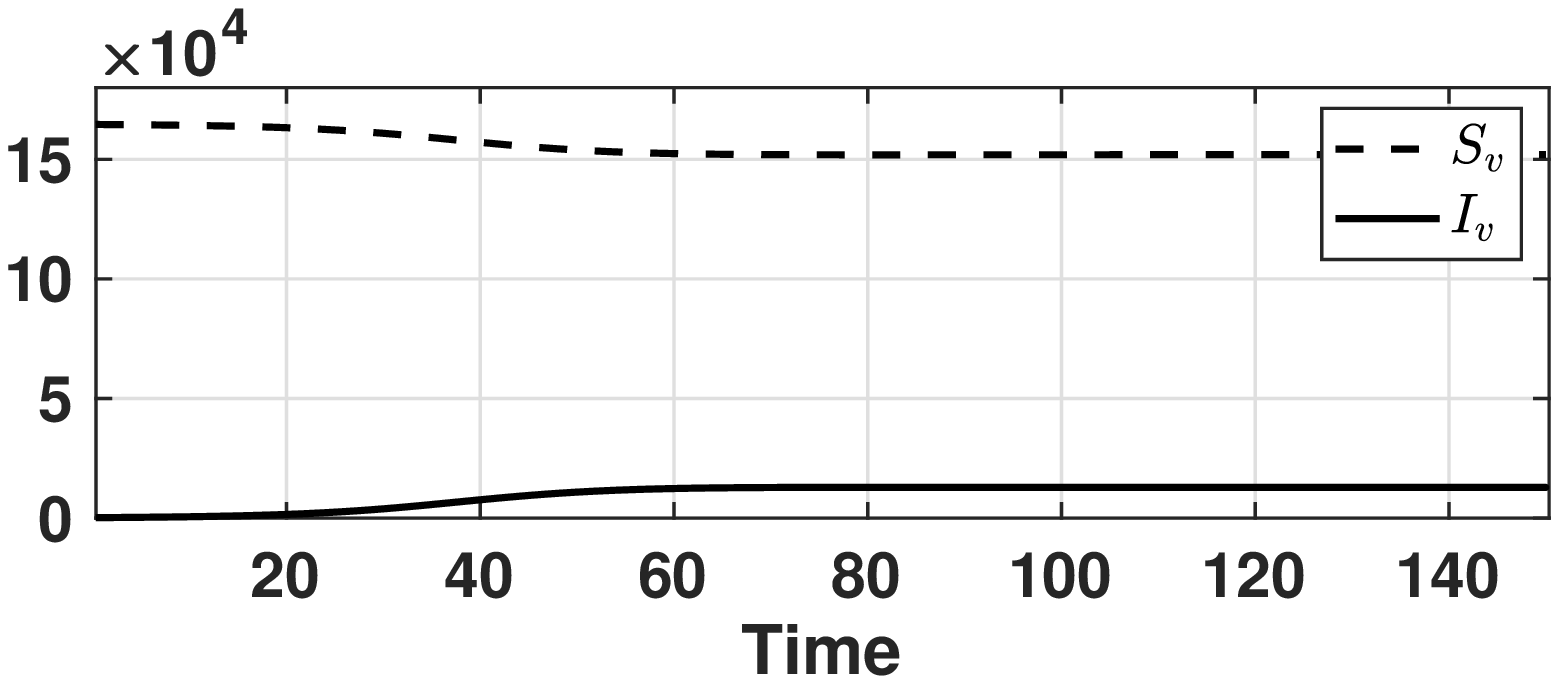}}
\caption{(left) Infected individuals $i_h(t,a)$ and (right) vector solutions with different initial conditions. For this choice of parameters, $\mathcal{R}_0>1$, $\mathcal{R}_v>1$ and several vector steady states exist; see Example \ref{ex4}. %Here, $r=3$ and $(E^*,N_v^*)=(0,0)$ is locally stable, and unstable when $r=4$.
\label{fig:g2_difI}}
\end{figure}
}
\end{example}

\begin{example} \label{ex5} {\rm
Similarly as Example \ref{ex2}, we confirm that $\mathcal{R}_0<1$ is sufficient for the disease to die out, even in the presence of a positive population of vectors. In this case, $(E^*,N_v^*) = (7430, 74305)$, but $I_v^*=0$; see Figure \ref{fig:ex5}.  Even when the {\it demographic vector number} is bigger than one the disease can be under control when $\mathcal{R}_0<1$. %i*=0 for R0<1

\begin{figure}[h!]
\centering
\subfloat{\includegraphics[width=0.49\textwidth]{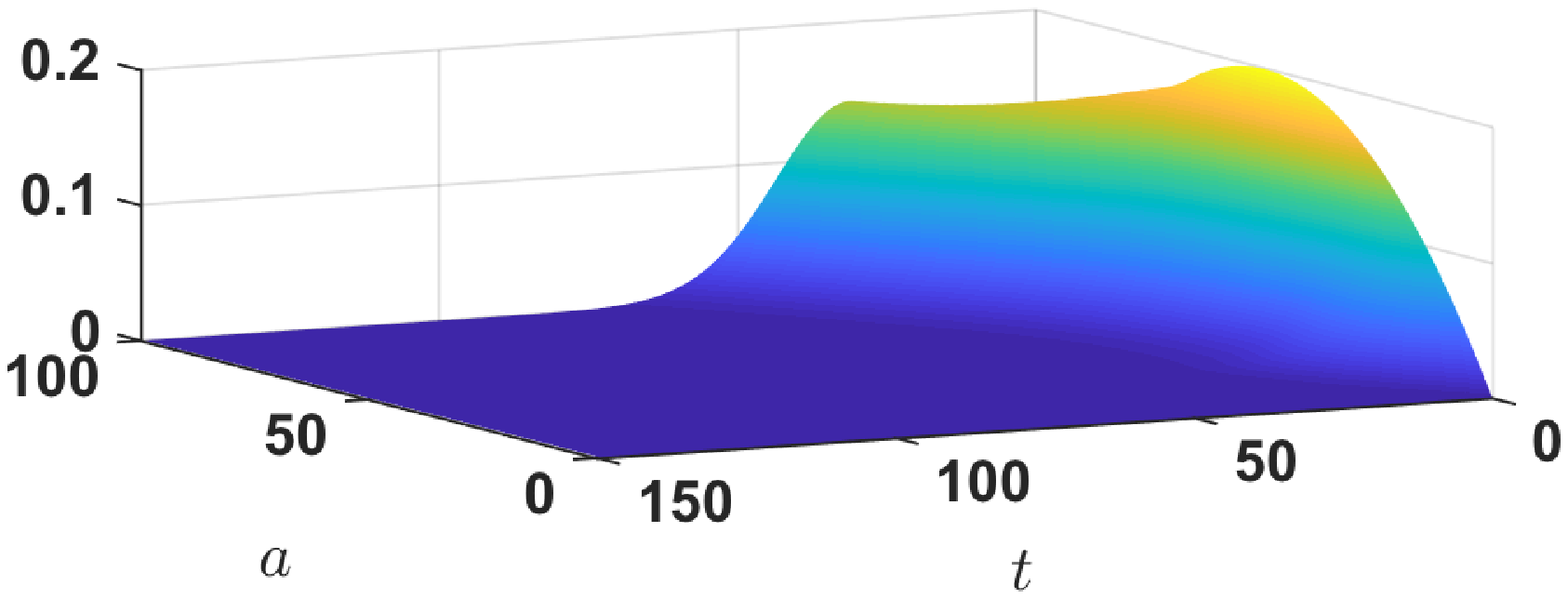}}\hfill
\subfloat{\includegraphics[width=0.49\textwidth]{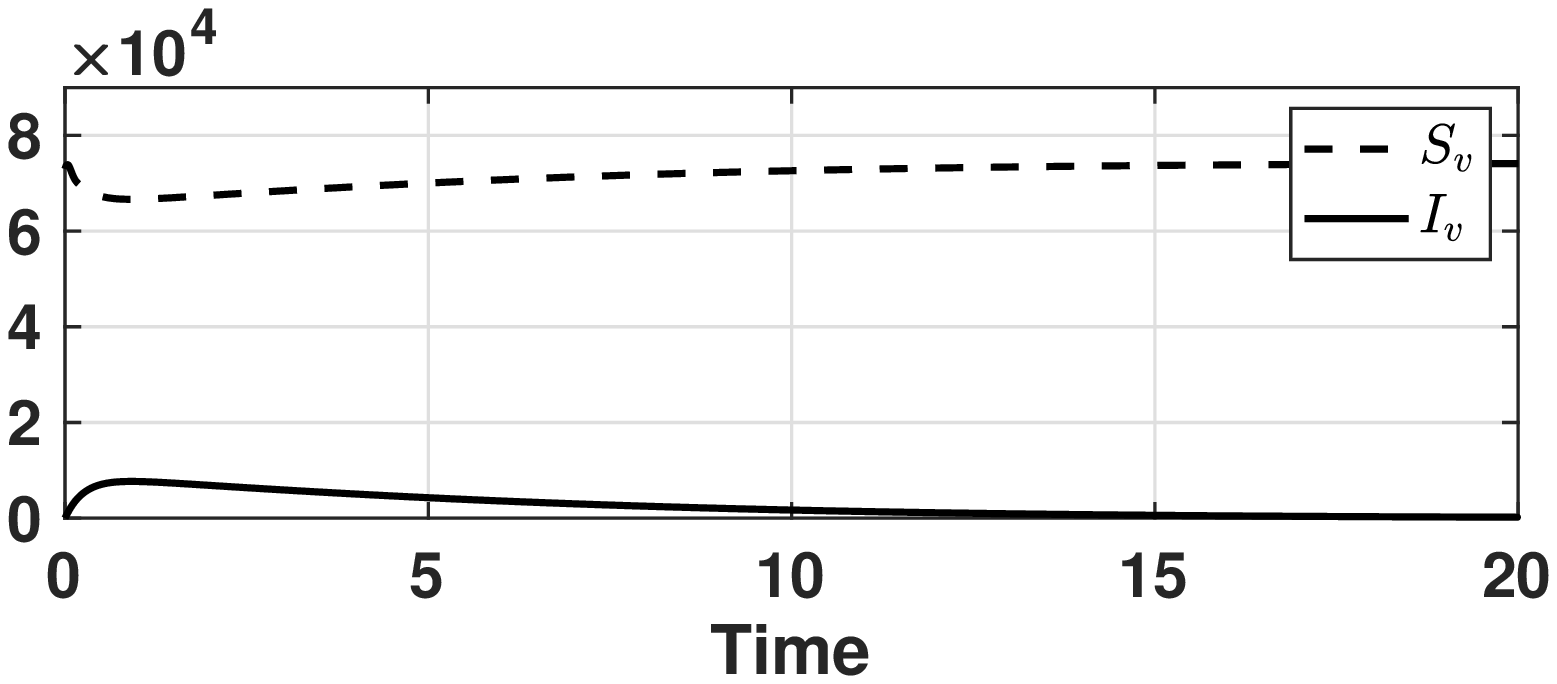}}\hfill
\caption{$\mathcal{R}_0<1$ is sufficient to guarantee that (left) $i_h(a)=0$ and (right) $I_v^*=0$, even though $S_v^*>0$, $E^*>0$; see Example \ref{ex5}. \label{fig:ex5}}
\end{figure} 
}
\end{example}

\begin{example} \label{ex6} {\rm
% i*>0 for R0>1, only one infected mosquito
Similarly as Example \ref{ex3}, we confirm that $\mathcal{R}_0>1$ implies the existence of an endemic state, as long as a positive equilibrium state exists for the vectors to survive; see Figure \ref{fig:ex6}.

\begin{figure}[h!]
\centering
\subfloat{\includegraphics[width=0.49\textwidth]{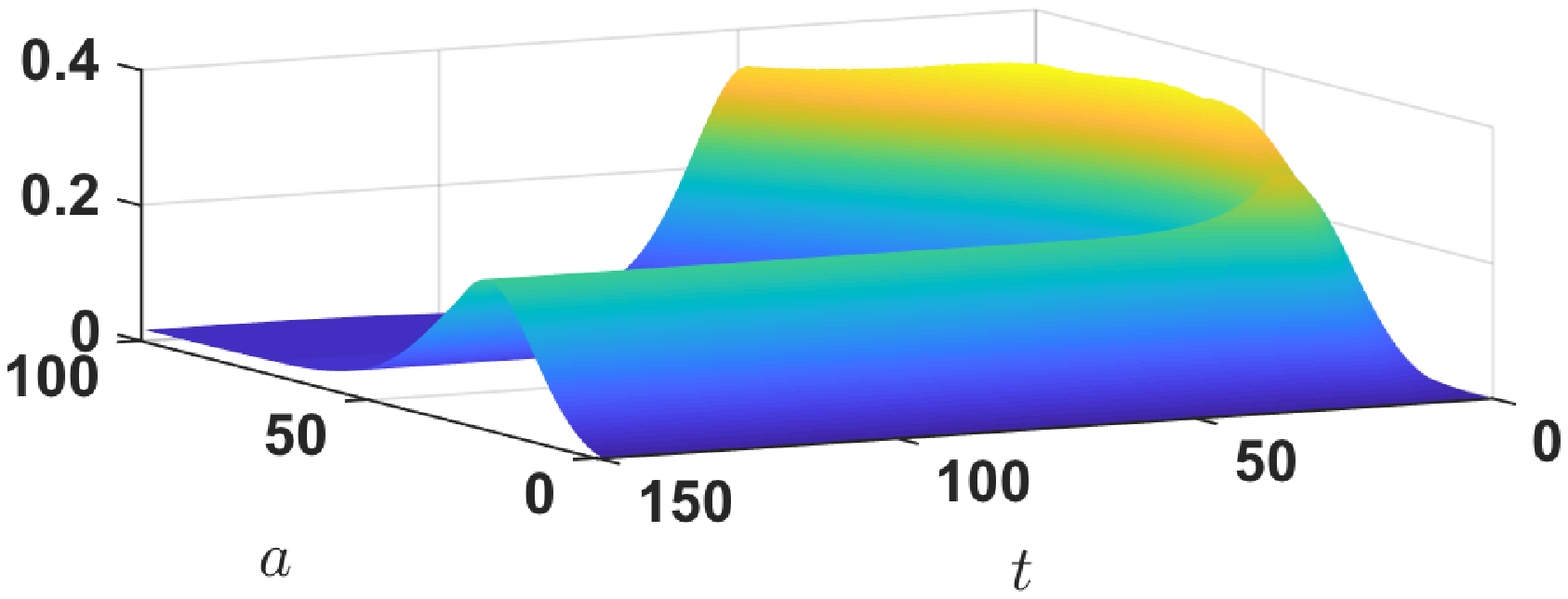}}\hfill
\subfloat{\includegraphics[width=0.49\textwidth]{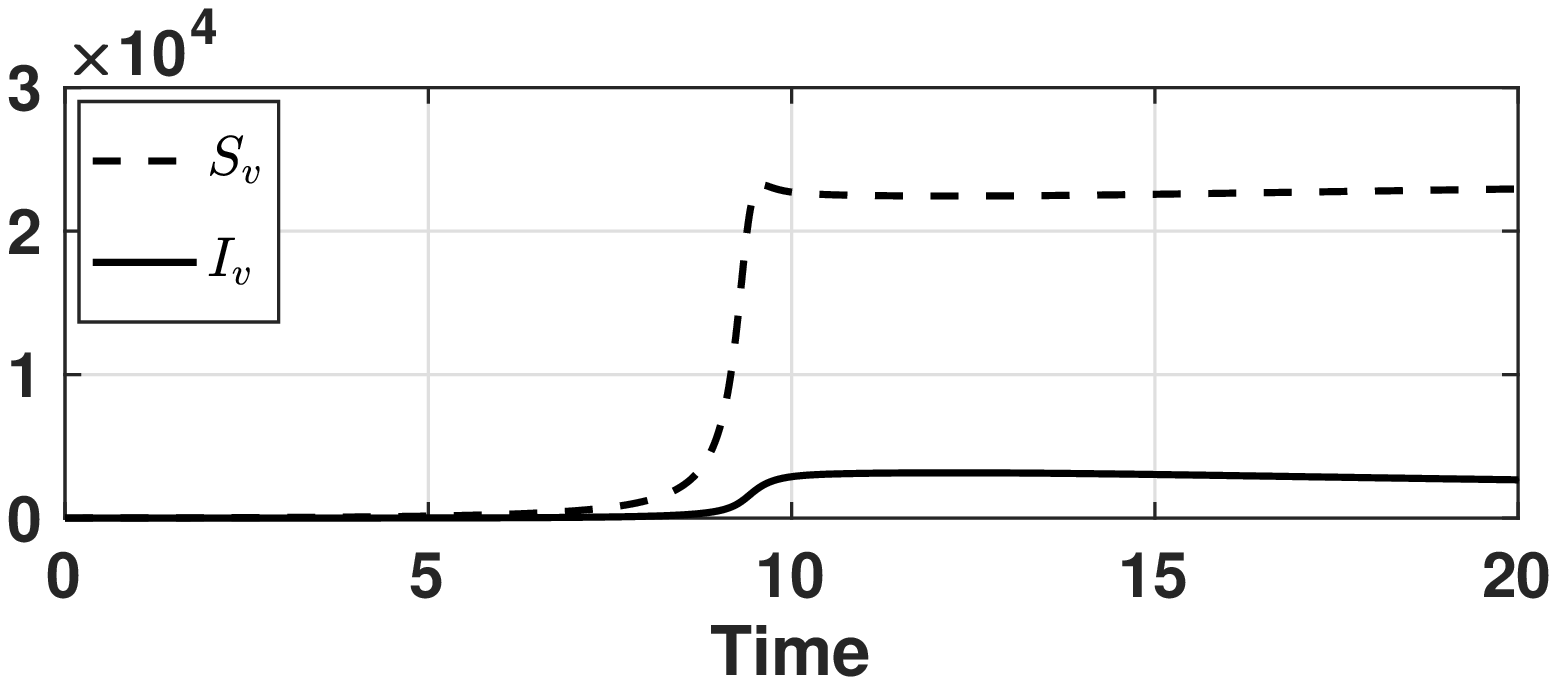}}\hfill
\caption{(Left) $i_h(t,a)$ and (right) $(E(t),S_v(t),I_v(t))$ for $\mathcal{R}_0>1$; see Example \ref{ex6}. Initially all humans are susceptible and $(E_0,S_{v0},I_{v0})=(0,10,1)$. \label{fig:ex6}}
\end{figure} 

}
\end{example}

\subsection{Effect of seasonality on dengue dynamics} \label{sec:seasons}
In most places where dengue is endemic, seasonal variations in vector populations play a major role in disease transmission. Moreover, it determines the distribution of resources allocated for preventive/control measures. Typically, dengue incidence is correlated with the rainy season. The importance of understanding seasonal variations per location could potentially help public health officials to allocate resources, as well as having better preventive/control measures to reduce dengue incidence (focused primarily towards the reduction of vector breeding sites).

\begin{figure}[ht!]
\centering
\subfloat{\includegraphics[width=0.49\textwidth]{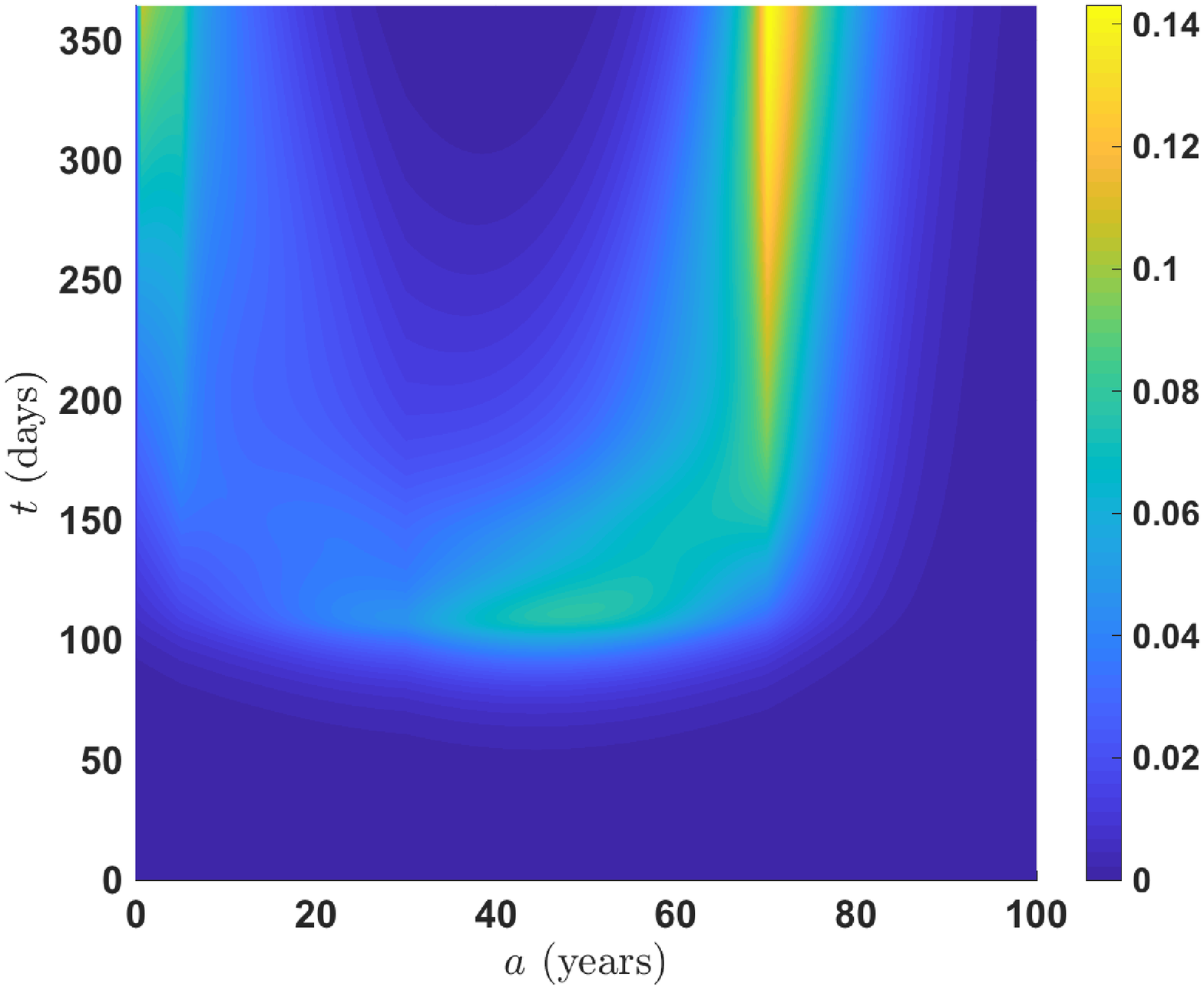}}\hfill
\subfloat{\includegraphics[width=0.49\textwidth]{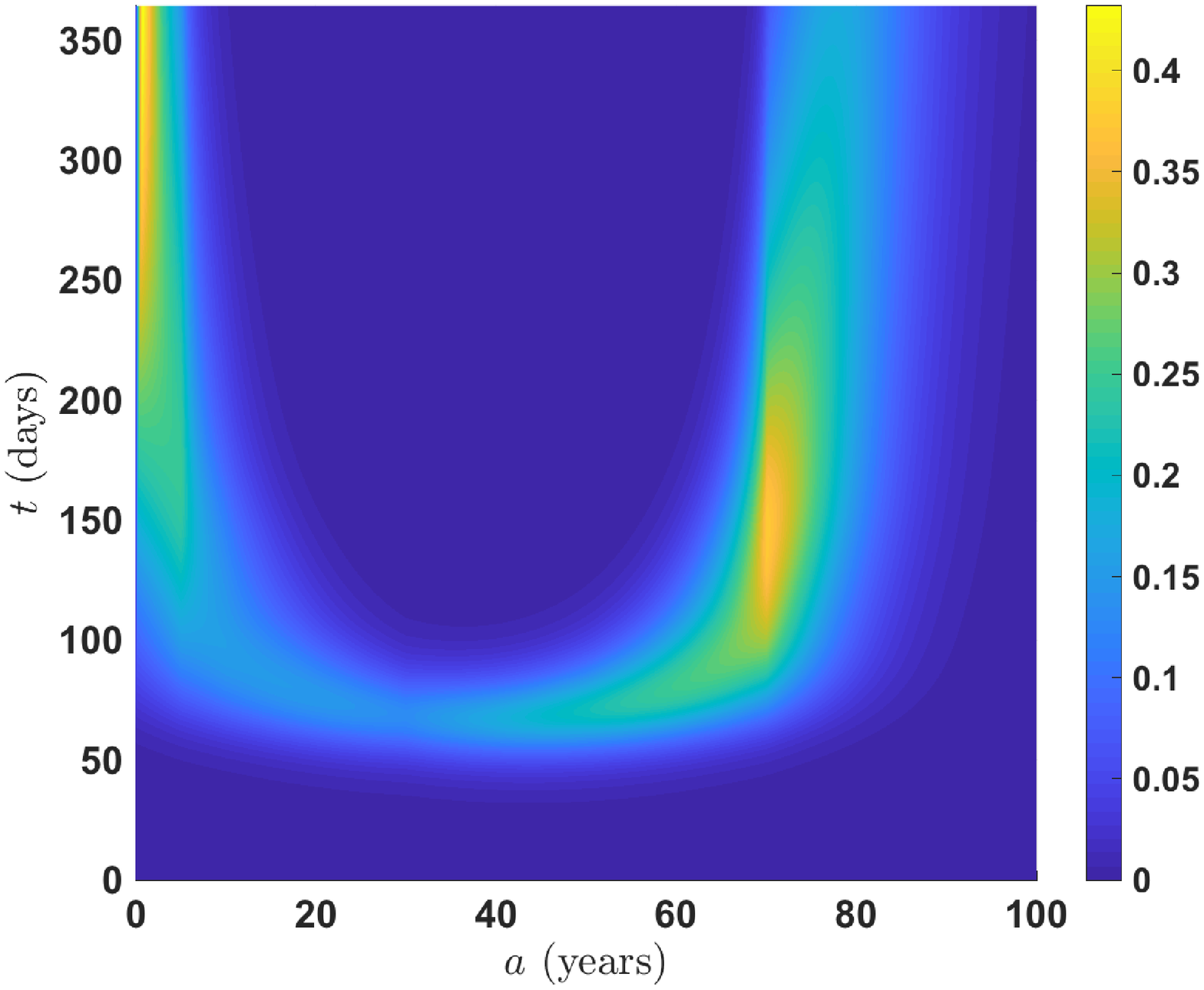}}\hfill
\caption{Solution for $i(t,a)$ when seasonal parameters are considered, for (left) $\beta_0 = 0.09$ and (right) $\beta_0 = 0.19$; see Section \ref{sec:seasons}. \label{fig:ex8}}
\end{figure} 

We include some numerical results where $g(N_v)$, $\beta_v$ and $\delta$ depend periodically on time, simulating high and low seasons in the dynamics of vectors. Here we use parameters for the vector classes as in \cite{sanchez2006}. We consider a population with only susceptible humans. In the vector classes, we include one infected vector in order to observe the propagation of the disease; see results in Figure \ref{fig:ex8}. For different values of the transmission rate ($\beta$(a)) the infected host distribution distinctly affects the younger and senior age groups.

\section{Discussion} \label{sec:disc}
We have constructed a model with age-structure within host and early-life stage of the vector with a general function $f(N_v)$ that represents the new vectors from the egg/larvae stage in the vector system. This gives the possibility of multiple demographic steady states for the vector population and its stability depends on the {\it vector demographic number}, $\mathcal{R}_v$. The local stability of the vector-free state when $R_v<1$ was established. The {\it basic reproductive number} was computed and the local and global asymptotic stability of the disease-free equilibrium was determined when $\mathcal{R}_0<1$.

When $\mathcal{R}_0>1$ and we have a stable vector demographic steady state ($\mathcal{R}_v>1$), the disease is then endemic. Control measures on the early-life stage of the vector can guarantee an adult vector-free state and hence, the disease dies out. Vector control measures such that $\phi > \max g(N_v)$ implies that vectors will die out independently on the value of $\mathcal{R}_0$.

There are important public health implications when we are able to include host age distribution, which can determine better strategies for hospitalized individuals. Furthermore, control measures on the early-life stage of the vector can effectively change the landscape on how public health officials lead prevention efforts before the onset of a dengue outbreak.

\section{Acknowledgements}
We thank the Research Center in Pure and Applied Mathematics and the Mathematics Department at Universidad de Costa Rica for their support during the preparation of this manuscript. The authors gratefully acknowledge institutional support for project B8747 from an UCREA grant from the Vice Rectory for Research at Universidad de Costa Rica. 

%\FloatBarrier 

%\newpage 

\end{document}